\theoremstyle{plain}
\newtheorem{theorem}{Theorem}[section]
\newtheorem{lemma}[theorem]{Lemma}
\theoremstyle{definition}
\newtheorem{assumption}[theorem]{Assumption}
\theoremstyle{remark}
\icmltitlerunning{Neural Attention Additive Q-learning}
\begin{document}

\twocolumn[
\icmltitle{N$\text{A}^\text{2}$Q: Neural Attention Additive Model for Interpretable\\ Multi-Agent Q-Learning}




\begin{icmlauthorlist}
	\icmlauthor{Zichuan Liu}{yyy}
	\icmlauthor{Yuanyang Zhu}{yyy}
	\icmlauthor{Chunlin Chen}{yyy}
\end{icmlauthorlist}
\icmlaffiliation{yyy}{Department of Control Science and Intelligence Engineering, Nanjing University, Nanjing, China}
\icmlcorrespondingauthor{Yuanyang Zhu}{yuanyang@smail.nju.edu.cn}
\icmlcorrespondingauthor{Chunlin Chen}{clchen@nju.edu.cn}

\icmlkeywords{Machine Learning, ICML}

\vskip 0.3in
]



\printAffiliationsAndNotice{}  

\begin{abstract} 
	Value decomposition is widely used in cooperative multi-agent reinforcement learning, however, its implicit credit assignment mechanism is not yet fully understood due to black-box networks.
	In this work, we study an interpretable value decomposition framework via the family of generalized additive models.
	We present a novel method, named Neural Attention Additive Q-learning (N$\text{A}^\text{2}$Q), providing inherent intelligibility of collaboration behavior.
	N$\text{A}^\text{2}$Q can explicitly factorize the optimal joint policy induced by enriching shape functions to model all possible coalitions of agents into individual policies. 
	Moreover, we construct identity semantics to promote estimating credits together with the global state and individual value functions, where local semantic masks help us diagnose whether each agent captures relevant-task information.
	Extensive experiments show that N$\text{A}^\text{2}$Q consistently achieves superior performance compared to different state-of-the-art methods on all challenging tasks, while yielding human-like interpretability.
\end{abstract}

\section{Introduction}

\begin{figure}[ht]
	\begin{center}
		\centerline{\includegraphics[width=1\linewidth]{./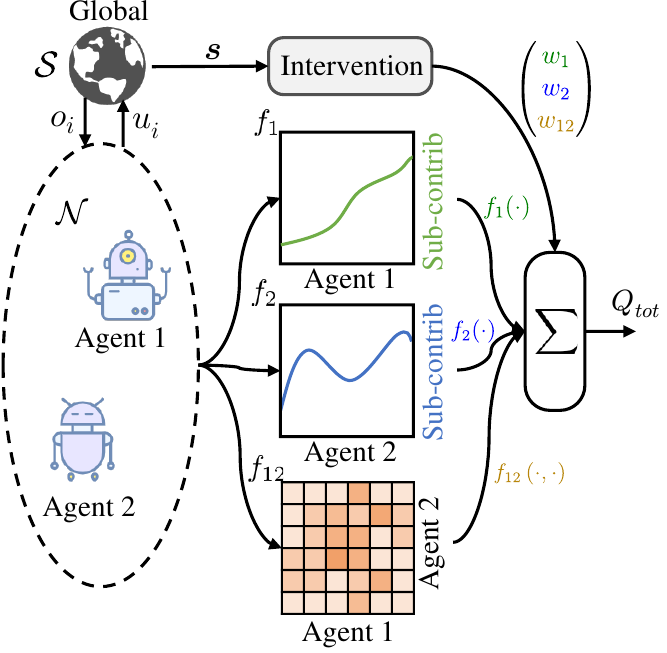}}
		\caption{An example of value decomposition via the GAMs family in MARL, where $\boldsymbol{s}\in \mathcal{S}$ is the global state, $f_k\in \{f_1, \cdots, f_{1 \ldots n}\}$ denotes the contribution of a shape function to learning individual or pairwise action values, and $Q_{tot}$ denotes the joint action value. 
		}
		\label{Introduction}
	\end{center}
	\vskip -0.3in
\end{figure}

Cooperative multi-agent reinforcement learning (MARL) has been proven to hold considerable promise for addressing many challenging real-world problems, e.g., autonomous driving~\cite{kiran2021deep}, scene understanding~\cite{chen2019counterfactual}, and robotics~\cite{kober2013reinforcement, lillicrap2015continuous}.
Value decomposition~\cite{rashid2018qmix,son2019qtran,wang2020qplex} has witnessed success in handling the joint action-value function effectively in value-based MARL methods.
This progress has been fueled by black-box neural structures, where the underlying decision process and credit assignment mechanisms are difficult for humans to understand and interpret.
Hence, explicitly understanding the decision-making processes and deducing the contribution of agents is still crucial in the MARL community.

A growing body of work attempts to demystify the decision-making process of deep reinforcement learning.
Instance-specific approximation methods aim to explain black-box predictions via the Shapley value~\cite{wang2020shapley} or clustering~\cite{zahavy2016graying} techniques in post-hoc explanation techniques.
However, these interpretable methods are considered computationally expensive~\cite{slack2021reliable} and unstable~\cite{ghorbani2019interpretation}, i.e., they often misrepresent models or agents' decisions.
Other works~\cite{bastani2018verifiable, silva2020optimization} have resorted to imitation learning to generate post-hoc global explanations aimed at distilling agent strategies, which lack the transparency of the original model and do not guarantee performance in complex tasks.
This landscape has ignited interest in intrinsic explanations, particularly in generalized additive models (GAMs)~\cite{hastie1986generalized}.
GAMs typically learn independent shape functions for each feature, whose outputs are combined for the final prediction, ensuring isolated contributions, e.g., NIT~\cite{tsang2018neural}, NAM~\cite{agarwal2021neural}, and NODE-GAM~\cite{chang2021node}.
Moreover, it can model all higher-order feature interactions with expressive power and easy scalability.
These successful interpretable GAMs stimulate our thinking in MARL domains, i.e., could GAMs facilitate more trustworthy agent collaboration and efficient credit assignment?

To leverage the benefits of GAM in MARL effectively, we introduce a unique value decomposition workflow as illustrated in in~\textit{Figure~\ref{Introduction}}. 
At each timestep $t$, each agent learns the decentralized action-value $Q_i$ and passes it to the central GAM while executing the action $u_i$, which then evaluates the team utility via the joint action-value $Q_{tot}$. 
Specifically, our GAM consists of several independent shape functions, where each function inputs a marginal or higher-order action value, outputting the corresponding agent's team contribution.
We restrict our attention to unary and pairwise shape functions to maintain interpretability and efficiently infer credits, helping in isolating individual and pairwise coalition contributions. 
However, the causal confounder is correlated with $\boldsymbol{s}$ and $Q_{tot}$, creating a spurious correlation among them, potentially complicating the learning of correct causal relationships. 
Drawing inspiration by~\cite{glymour2016causal,li2022deconfounded}, to relieve the spurious correlation between $\boldsymbol{s}$ and $Q_{tot}$, we construct local semantics alongside the global state to compute credits. 
In this case, it explicitly provides a perspective on diagnosing whether the individual agent could effectively avoid the negative influence of focusing on irrelevant input information.
Meanwhile, this brings about semantic masks that can diagnose agents' local observations.
We utilize the attention mechanism~\cite{vaswani2017attention} as an intervention term to capture the credit of each shape function, facilitating effectively capturing credit assignment. 
We call this comprehensive solution Neural Attention Additive Q-learning~(N$\text{A}^\text{2}$Q), which offers a fresh perspective for interpreting collaboration among agents and understanding local semantics.

Our contributions are summarized as follows: 
(1) We propose a novel value decomposition method, called Neural Attention Additive Q-learning (N$\text{A}^\text{2}$Q), which moves a step towards modeling all possible higher-order interactions and interpreting their collaboration behavior. 
We give rigorous proof that N$\text{A}^\text{2}$Q guarantees an acceptable regret bound by enriching the Taylor expansion of $Q_{tot}$ based on the GAM family.
(2) We provide diagnostic insights into what the agent captured from its observation by maximizing the observation resemblance and generating masks through encoding the local semantics, which is applied to the mixer to promote credit deduction.
(3) Through extensive experiments on challenging MARL benchmarks, N$\text{A}^\text{2}$Q not only consistently achieves superior performance compared to different state-of-the-art methods but also allows for an easy-to-understand of credit assignment among agents.

\section{Preliminaries}
\subsection{Dec-POMDP}
A fully cooperative multi-agent task generally can be formulated as a Dec-POMDP~\cite{oliehoek2016concise}, which consists of a tuple $\left< \mathcal{N}, \mathcal{S}, \mathcal{U}, \mathcal{P}, r, O, \Omega, \gamma  \right>$, where $\mathcal{N}$ represents a finite set of $n$ agents, and $\boldsymbol{s}\in \mathcal{S}$ describes the global state of the environment.
At each time step, each agent $i\in \mathcal{N}$ receives its own observation $o_i\in \Omega $
according to the partial observation $O(\boldsymbol{s}, i)$ and chooses an action $u_i\in \mathcal{U}$ to formulate a joint action $\boldsymbol{u}=[u_i]_{i=1}^n\in \mathcal{U}^n$. It results in a next state transition $\boldsymbol{s}'$ according to the transition function $\mathcal{P}(\boldsymbol{s}'|\boldsymbol{s}, \boldsymbol{u}): \mathcal{S}\times \mathcal{U}^n\to \mathcal{S}$ and all agents receive a joint reward $r(\boldsymbol{s}, \boldsymbol{u}): \mathcal{S}\times \mathcal{U}^n\to \mathbb{R}$. 
Moreover, each agent $i$ learns its own policy $\pi_i(u_i|\tau_i):\mathcal{T}\times \mathcal{U}\to [0, 1]$ conditions on its local action-observation history $\tau_i\in \mathcal{T}$, and we define $\boldsymbol{\tau}\in \boldsymbol{\mathcal{T}}$ to denote joint action-observation history.
The formal goal of all agents is to maximize the joint value function $Q^{\boldsymbol{\pi}} =  \mathbb{E}\left[\sum_{t=0}^{\infty} \gamma^{t} r^{t}\right]$ that finds an optimal joint policy $\boldsymbol{\pi}=\left[\pi_{i}\right]_{i=1}^n$, where $\gamma\in [0, 1)$ is a discount factor.

\subsection{Credit Assignment in MARL}
Value decomposition methods by credit assignment~\cite{sunehag2017value,rashid2018qmix,wang2020qplex} are the most popular branches in the centralized training and decentralized execution (CTDE)~\cite{oliehoek2008optimal} paradigm.
These methods should satisfy the individual-global-max~(IGM) principle~\cite{son2019qtran} to guarantee the consistency between local and global greedy actions as
\begin{equation}
\arg \underset{\boldsymbol{u}\in \mathcal{U}^n}{\max}\,Q_{tot}(\boldsymbol{\tau},\boldsymbol{u}) = \begin{pmatrix}
\arg \underset{u_1\in \mathcal{U}}{\max}\, Q_{1}(\tau _1,  u_1) \\
\vdots  \\
\arg\underset{u_n\in \mathcal{U}}{\max}\, Q_{n}(\tau_n, u_n)
\end{pmatrix},
\label{igm}
\end{equation}
where $Q_{tot}\in \mathcal{Y}$ is the joint action value for each individual value function $Q_i(\tau_i, u_i)$. 
Under this principle, credit assignment aims to infer the contributions of predecessor value functions to $Q_{tot}$~\cite{li2022deconfounded}. 
The decomposition values $[Q_i]_{i=1}^n\in \mathcal{Q}$ are usually transformed into temporal values $[\widehat{Q}_k]_{k=1}^m$ via a human-designed function $f_k$ with the global state $\boldsymbol{s}$, where $m$ is the function number.
It can represent a more general formulation as $Q_{tot} = \sum_{k=1}^m \alpha_{k} \widehat{Q}_k$ with the credit $\alpha_{k}$, and we assume $[Q_i]_{i=1}^n$ and $Q_{tot}$ are drawn following a kind of fixed (but unknown) distribution $\mathfrak{P}: \mathcal{Q}\to  \mathcal{Y}$.
The introduction of representative algorithms for the above formulation can be referred to in Appendix~\ref{appendix1}.


\section{Theoretical Analysis for Decomposition}
\label{sec4}
Previous value-based studies have achieved great success in handling the joint action-value function to effectively enable CTDE in MARL.
However, they often suffer from at least one of the following limitations: 
(1) Value decomposition ideas~\cite{son2019qtran, wang2020qplex, iqbal2021randomized,rashid2020weighted} with complex non-linear transformations may often fail to allow us to explicitly model the contribution of each agent or coalition of agents.
(2) VDN~\cite{sunehag2017value}, Qatten~\cite{yang2020qatten}, and SHAQ~\cite{wang2021shaq} measure the importance of each individual to the team, which ignores potentially all possible coalitions of all agents.
(3) Existing mainstream value decomposition methods seldom diagnose whether individual agents can focus on specific information to help the mixer build a reasonable correlation of credit assignment between the global state $\boldsymbol{s}$ and the joint value function with a limited view of their surroundings.
To resolve these problems, we propose a novel interpretable value decomposition method that uses the neural additive model to learn higher-order permutation relationships of each agent in terms of the local expansion of the joint action-value $Q_{tot}$, which achieves a better trade-off between performance and interpretability.

Following the general framework of the value decomposition method, we recall the joint action-value function and expand it in terms of $Q_i$ by the Taylor expansion as
\begin{equation}
Q_{tot}=f_0+\sum_{i=1}^n\alpha_{i}Q_i+\cdots+\sum_{i_1,\ldots, i_l}^n \alpha_{i_1\ldots i_l} \prod_{j=1}^l Q_{i_j}+\cdots,
\label{eqqqq}
\end{equation}
where $f_0$ is a constant, all partial derivatives $\alpha_i = \frac{\partial Q_{tot}}{\partial Q_i}$ of order-$1$, and $\alpha_{i_1\ldots i_l}=\frac{1}{l!}\frac{\partial^l Q_{tot}}{\partial Q_{i_1}\ldots \partial Q_{i_l}}$ of order-$l$.
In this term, it can be seen as a simple polynomial GAM expression~\cite{dubey2022scalable} with full $n$ order interactions, which theoretically allows for learning any possible interaction order relationship among all agents.
We enrich Eq.~(\ref{eqqqq}) with a general neural additive model (NAM)~\cite{agarwal2021neural}, as an extended GAM method, providing more precise predictions for the contribution of individual agents and coalitions of agents, which is formulated as
\begin{equation}
\begin{split}
Q_{tot}=f_0 +\overbrace{\sum_{i=1}^n \alpha_{i} \underbrace{f_i\left(Q_i\right)}_{\text {order-}1}}^{\normalsize{\textcircled{\scriptsize{1}}}\normalsize \text{ similar to VDN}}+\cdots+
\sum_{k\in\mathcal{D}_l} \alpha_{k} \underbrace{f_{k}\left(Q_k\right)}_{\text {order-}l}\\
+\cdots+\overbrace{\alpha_{1 \ldots n}\underbrace{f_{1 \ldots n}(Q_1,\ldots,Q_n)}_{\text {order-}n}}^{\normalsize{\textcircled{\scriptsize{2}}}\normalsize \text{ e.g., QMIX}},
\end{split}
\label{sdada}
\end{equation}
where $f_k \in \{f_1, \cdots, f_{1 \ldots n}\}^m$ is a shape function that transforms $l$ local values $Q_k$ into a temporal value $\widehat{Q}_k$,
and $\mathcal{D}_l$ is the set of all non-empty subsets of $l \in \{1, \cdots, n\}$ with order-$l$ interactions, i.e., $D_l = \{i_1\ldots i_l\}$.
When searching for a better value decomposition, we are often interested in this enrichment of the difference.
To this end, we introduce the empirical risk minimizer $\widehat{Q}_{tot}$ in Eq~(\ref{sdada}) and the expected risk minimizer $Q_{tot}^{\star}$ in Eq~(\ref{eqqqq}) and consider $\mathcal{L}(\widehat{Q}_{tot})-\mathcal{L}(Q_{tot}^{\star})$ as a regret bound.
The conclusion shows that an upper bound always exists on our generalization according to regret analysis under the \textit{1-Lipschitz} loss approximation.
We provide approximation guarantees and detailed derivations for this type of enrichment, which can be found in Appendix~\ref{app2} along with rigorous proofs.

Most existing MARL methods primarily focus on one of the terms in Eq.~(\ref{sdada}), aiming to maximize performance while neglecting the different orders of coalition among agents.
For instance, VDN decomposes $Q_{tot}$ into a sum of individual action values representing only a limited class with order-$1$, i.e., it is similar to term $\normalsize{\textcircled{\scriptsize{1}}}\normalsize$ with equal credits.
QMIX considers mixing all individual action values as the most effective value decomposition method falling under term $\normalsize{\textcircled{\scriptsize{2}}}\normalsize$, but it does not explicitly show its credit assignment.
It is widely recognized~\cite{lou2013accurate, chang2021node} that it ceases to be interpretable with increasing order in Eq.~(\ref{sdada}), e.g., beyond pairwise interactions, albeit with some advantages to performance. 
Following this idea, we aim to maintain both the performance and interpretability of collaboration relationships in terms of any order of interaction by learning each unary and pairwise shape function as
\begin{equation}
Q_{tot} = f_0 +\sum_{i=1}^{n} \alpha_i f_i(Q_i)+\sum_{ij\in \mathcal{D}_2} \alpha_{ij} f_{ij}(Q_i, Q_j). \label{eq33333}
\end{equation}

\begin{figure*}[ht]
	\begin{center}
		\centerline{\includegraphics[width=\linewidth]{./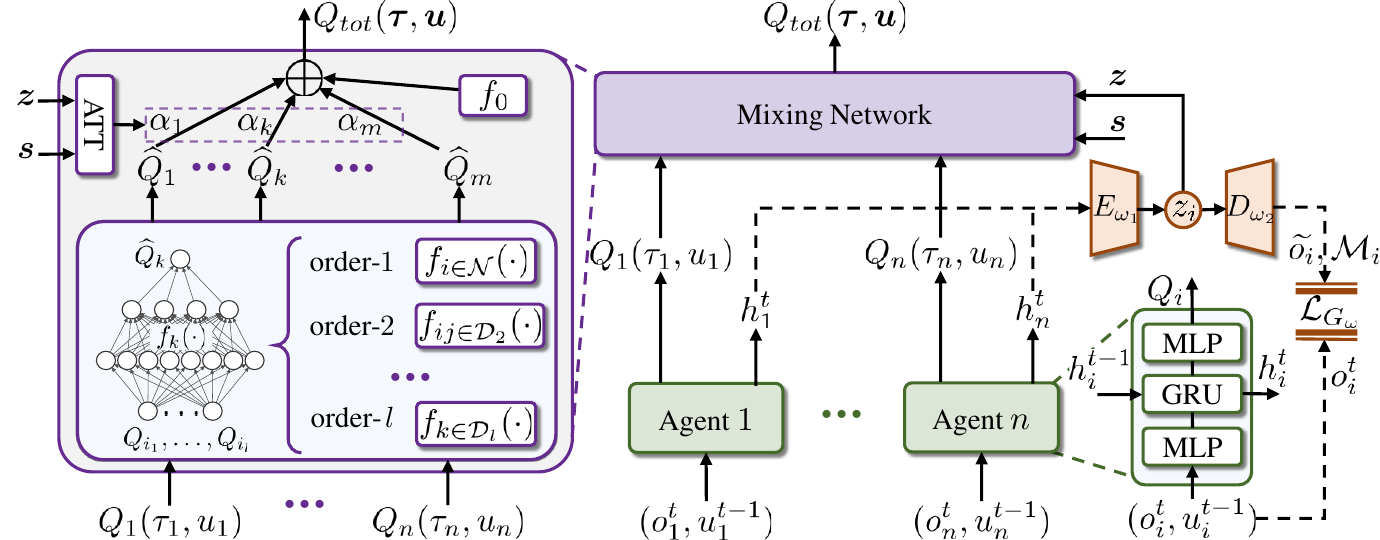}}
		\vskip -0.05in
		\caption{The overall framework of N$\text{A}^\text{2}$Q.
			First, each agent receives its local action-observation history $\tau_i$ and models its individual value function $Q_i(\tau_i, u_i)$.
			Next, we construct the identity semantic $z_i$ by encoding $\tau_i$, and take it together with the global state $\boldsymbol{s}$ to estimate credits, which provides a captured semantic interpretation. 
			In the mixing network, we transform the local Q-values $[Q_i]_{i=1}^n$ into temporal Q-values $[\widehat{Q}_k]_{k=1}^m$ by the shape function $f_k$ within order-$l$, where $l\in \mathcal{N}$, which are used to predict the joint Q-value together with credits.
		}
		\label{111111111}
	\end{center}
	\vskip -0.25in
\end{figure*}

Furthermore, previous works~\cite{rashid2018qmix, iqbal2021randomized} typically learn credit $\alpha_k$ by the global state $\boldsymbol{s}$, with $P(Q_{tot}| \boldsymbol{s})$ calculated. 
It brings a spurious association between $\boldsymbol{s}$ and $Q_{tot}$ that restricts deducing the contributions of individual agents and sub-teams from the overall success~\cite{li2022deconfounded}.
One possible solution is to impose an intervention function on $\boldsymbol{s}$ by identifying the local history $\tau_i$ in an unobservable environment.
Moreover, we are unsure whether the individual agent captures the important information that could help the mixer produce the credits from its observation instead of blindly pursuing performance.
To this end, from the perspective of diagnosing the individual agent, we explicitly generate an individual semantic $z_i$ from  $\tau_i$ to achieve the identity representation, and then decentralized credit assignment is obtained by calculating 
\begin{equation}
P(Q_{tot}| I(\boldsymbol{s})) = \sum _{\boldsymbol{z}} P(Q_{tot}| \boldsymbol{s}, \boldsymbol{z}) P(\boldsymbol{z}),
\label{model}
\end{equation}
where $I(\cdot)$ denotes the intervention function and the joint semantic $\boldsymbol{z}$ is generated for backdoor adjustment by sampling individual semantics as $\boldsymbol{z}=[z_i \sim  P(\tau_i)]^n_{i=1}$.
It helps us diagnose whether individual agents focus on the relative importance of different task-relevant observations during the decision-making process in a more interpretable manner.

\section{Neural Attention Additive Q-learning}
Based on the previous analysis in Section~\ref{sec4}, we propose a novel interpretable value decomposition method, called \textit{Neural Attention Additive Q-learning}~(N$\text{A}^\text{2}$Q), that explicitly learns a decomposition mapping for all possible order interactions among agents and captures semantic information from their observations.
\textit{Figure~\ref{111111111}} illustrates the overall training procedure.
For each agent, N$\text{A}^\text{2}$Q models a local value function $Q_i(\tau_i, u_i)$ and generates the identity semantic $z_i$ by encoding the history $\tau_i$. 
In this process, we maximize the resemblance of observations through decoding to ensure the accuracy of semantic information and upsample masks as an interpretation.
In the mixer, the local Q-values $[Q_i]_{i=1}^n$ are transformed  into $[\widehat{Q}_k]_{k=1}^m$ by all interactions of shape functions within order-$l$ among agents, and the united semantics $\boldsymbol{z}$ and the global state $\boldsymbol{s}$ are fed into the intervention function to estimate credits.
The joint value function is predicted depending on the temporal values $[\widehat{Q}_k]_{k=1}^m$ as well as credits.
It can exactly model the contribution of any agent or coalition of agents to the overall success by enriching Eq.~(\ref{eqqqq}) with NAM.


\textbf{Individual Action-Value Function.} 
Following the mainstream works~\cite{rashid2018qmix, wang2020qplex}, we employ a recurrent Q-function~\cite{hausknecht2015deep} with parameter sharing for each agent $i$. 
Specifically, each function takes current local observation $o^t_i$ with previous action $u^{t-1}_i$ and previous hidden state $h^{t-1}_i$ as inputs, and then outputs current hidden state $h^t_i$ and local Q-value.

\textbf{Constructing Identity Semantic.} 
For each agent, we consider a general setting in which each agent focuses its observation on task-relevant regions.
To capture this focus, we construct an underlying latent semantic from a local action observation of each agent via a variational auto-encoder (VAE)~\cite{sohn2015learning}, which can produce semantics corresponding to the importance assigned to each input observation.
Over the course of training, the action-observation $\tau_i$ of each agent $i$ is encoded by the VAE $G_{\omega} = \left\{E_{\omega_1},  D_{\omega_2} \right\}$ to sample its own identity semantic as $z_i =  E_{\omega_1}(\tau_i)$.
This semantic is then used as input for $D_{\omega_2}$ and upsampled to generate an attention mask as $\mathcal{M}_i =  \varsigma (D_{\omega_2}(z_i))$, where $\varsigma (\cdot)$ represents the sigmoid function. 
Generally, the mask is interpreted to show where the agent is ``looking" to make a decision~\cite{shi2020self}. 
To maximize the resemblance between the identity semantic $z_i$ and local observation $o_i$, the VAE $G_{\omega}$ is trained on a loss of the reconstruction observation along with a KL-divergence as
\begin{equation*}
\mathcal{L}_{vae}=\sum_{i=1}^n \left\| o_i-\widetilde{o_i}\right\|^2_2 + D_{\text{KL}}(\mathcal{N}(\mu, \sigma)||\mathcal{N}(0, I)), \label{eq2}
\end{equation*}
where $\widetilde{o_i} = \mathcal{M}_i\odot o_i$ and $\odot$ represent the overlaid observation with the mask and the element-wise multiplication, respectively.
The normal distribution $\mathcal{N}(\mu, \sigma)$ is represented by deterministic functions, whose introduction is deferred to Appendix~\ref{app3}.
Meanwhile, the mask is expected to focus on as sparse and relevant region information as possible, so we apply a direct penalty to the mask by $L_1$-norm as
\begin{equation}
\mathcal{L}_{G_{\omega}}=\mathcal{L}_{vae}+ \sum_{i=1}^n\left\|  \mathcal{M}_i \right\|_1.
\label{eq3}
\end{equation}
By training the VAE with parameters $\omega = \{\omega_1, \omega_2\}$, we can obtain an attention mask to help humans better understand the local observation and latent identity semantic of each agent to influence the prediction of its action.

\textbf{Learning Decomposition with Credit Assignment.} 
To accomplish the decomposition formation in Eq.~(\ref{eq33333}) and (\ref{model}), we let  $Q_{tot}$ be decomposed into a neural GAM paradigm within order-$2$ by setting $l\leq 2$ as
\begin{equation}
Q_{tot} = f_0(\boldsymbol{s}) +\sum_{i=1}^{n} \alpha_i f_i(Q_i)+\sum_{i=1}^{n}\sum_{j>i}^{n} \alpha_{ij} f_{ij}(Q_i, Q_j), \label{eq5}
\end{equation}
where $f_0$ is a bias term, univariate and bivariate shape functions $f_k$ are nonlinear functions (e.g., lightweight MLPs). 
To satisfy the IGM principle in Eq.~(\ref{igm}), we restrict all the network weights to be non-negative by using the absolute in $f_k$.
Considering that more efficient credit assignments can help local agents predict their actions more precisely, we also introduce the intervention function to realize decomposed training for backdoor adjustment.
Specifically, the credit $\alpha_{k}$ is computed with the identity semantics $[z_i]_{i=1}^n$ and the global state $\boldsymbol{s}$ through a dot-product attention as
\begin{equation}
\alpha_{k} = \left [\alpha_{i}, \alpha_{ij}\right ] = \frac{\exp((\boldsymbol{w}_z \boldsymbol{z})^\top \text{ReLU}(\boldsymbol{w}_s \boldsymbol{s}))}{\sum_{k=1}^m \exp((\boldsymbol{w}_z \boldsymbol{z})^\top \text{ReLU}(\boldsymbol{w}_s \boldsymbol{s}))
}, \label{eq4}
\end{equation}
where $\boldsymbol{w}_s$, $\boldsymbol{w}_z$ are the learnable parameters, and ReLU is employed as the activation function.  
$\alpha_{k}$ is positive with softmax operation to ensure monotonicity.

\textbf{Interpretability.} 
Interpreting decomposition in Eq.~(\ref{eq5}) is intuitive as the influence of an individual Q-value on the prediction operates independently of other action values. 
It is possible to visualize the mapping relationships by visualizing the univariate shape function $f_i$, e.g., plotting $Q_i$ on the $x$-axis and  $\alpha_{i}f_i(Q_i)$ on the $y$-axis. 
The bivariate shape function $f_{ij}$ is visualized through a heatmap~\cite{lou2013accurate, radenovic2022neural}, which is commonly used to achieve interpretation. 
Note that the visualization of the function accurately depicts how N$\text{A}^\text{2}$Q computes a prediction.
In addition, the semantics of individual agents are upsampled into masks to represent feature importance, increasing the confidence of local observations on the semantics.

The overall learning objective is to end-to-end train the whole framework by minimizing the loss $\mathcal{L}$ with the mean squared temporal-difference (TD) error as 
\begin{equation}
\mathcal{L}(\theta, \omega)=
\left\| Q_{tot}(\boldsymbol{\tau}, \boldsymbol{u}) -y\right\|^2_2 +\beta \mathcal{L}_{G_{\omega}}
\label{eq8}
\end{equation}
where $\theta, \omega$ are the whole framework parameters and $\beta$ is a hyperparameter adjusting the weight of VAE loss. 
The target is estimated via Double DQN~\cite{van2016deep} as $y'=r+\gamma \overline{Q}_{tot}(\boldsymbol{\tau}', \arg \max_{\boldsymbol{u}'\in \mathcal{U}^n}{Q}_{tot}(\mathbf{\boldsymbol{\tau}'}, \boldsymbol{u}'))$.
We summarize the pseudo-code of the proposed approach in Appendix~\ref{pseudo}.

\section{Experiments}
In this section, we demonstrate our experimental results of N$\text{A}^\text{2}$Q on challenging tasks over LBF~\cite{christianos2020shared} and SMAC~\cite{samvelyan2019starcraft} benchmarks.
The baselines that we select for comparison are nine popular value-based baselines, including VDN~\cite{sunehag2017value}, QMIX~\cite{rashid2018qmix}, QTRAN~\cite{son2019qtran}, Qatten~\cite{yang2020qatten}, QPLEX~\cite{wang2020qplex}, Weighted QMIX~\cite{rashid2020weighted}, DVD~\cite{li2022deconfounded}, CDS~\cite{li2021celebrating}, and SHAQ~\cite{wang2021shaq}.
The implementation details of all algorithms are provided in Appendix~\ref{app4444}, along with the benchmarks. 
All graphs showing performance results for our method, baselines, and ablations study are plotted using $\text{mean}\pm \text{std}$ with five random seeds. 
Further, we present the interpretability of N$\text{A}^\text{2}$Q to render empirical evidence about which observations are of interest to the agents, as well as the contributions of each agent and coalition. 
The source code is available at \url{https://github.com/zichuan-liu/NA2Q}.

\subsection{Level Based Foraging}
We first run the experiments on two constructed LBF tasks, wherein agents navigate a $10\times10$ grid world and collect food by cooperating with other agents if needed. 
Each agent can observe a $5\times5$ sub-grid centering around it.
When they cooperate to eat food that is smaller than their level at each step, they will receive a positive reward, otherwise, they will receive a negative reward of $-0.002$.
The action space for each agent consists of movement in four directions, eating food, and a ``none" action. 
We evaluate the performance of various algorithms with two quantities of agents and food.

\begin{figure}[h]
	\begin{center}
		\centerline{\includegraphics[width=\columnwidth]{./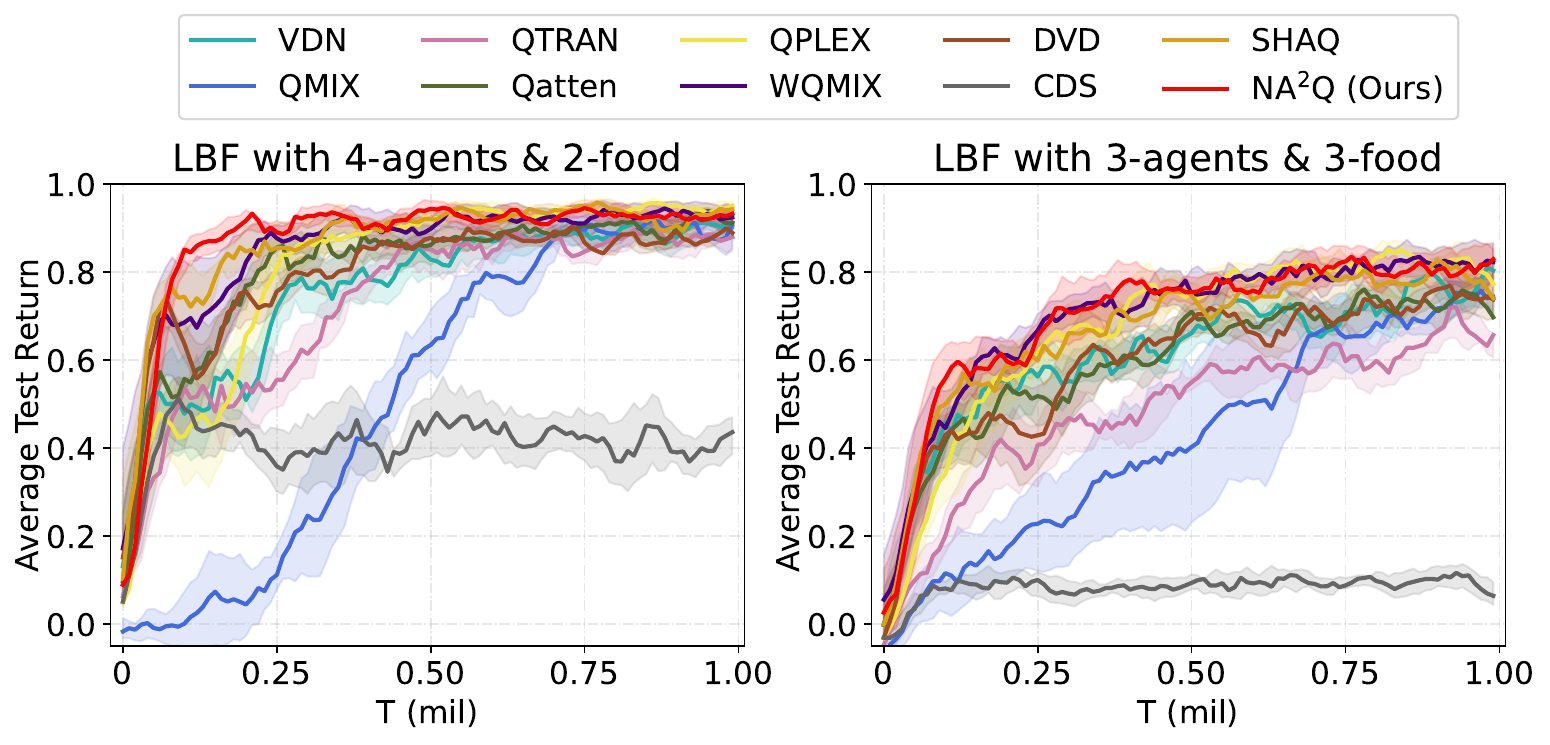}}
		\vskip -0.05in
		\caption{Average test return on two constructed tasks of LBF.}
		\label{overview_results_lbf}
	\end{center}
	\vskip -0.2in
\end{figure}

\begin{figure*}[ht]
	\begin{center}
		\centerline{\includegraphics[width=1\linewidth]{./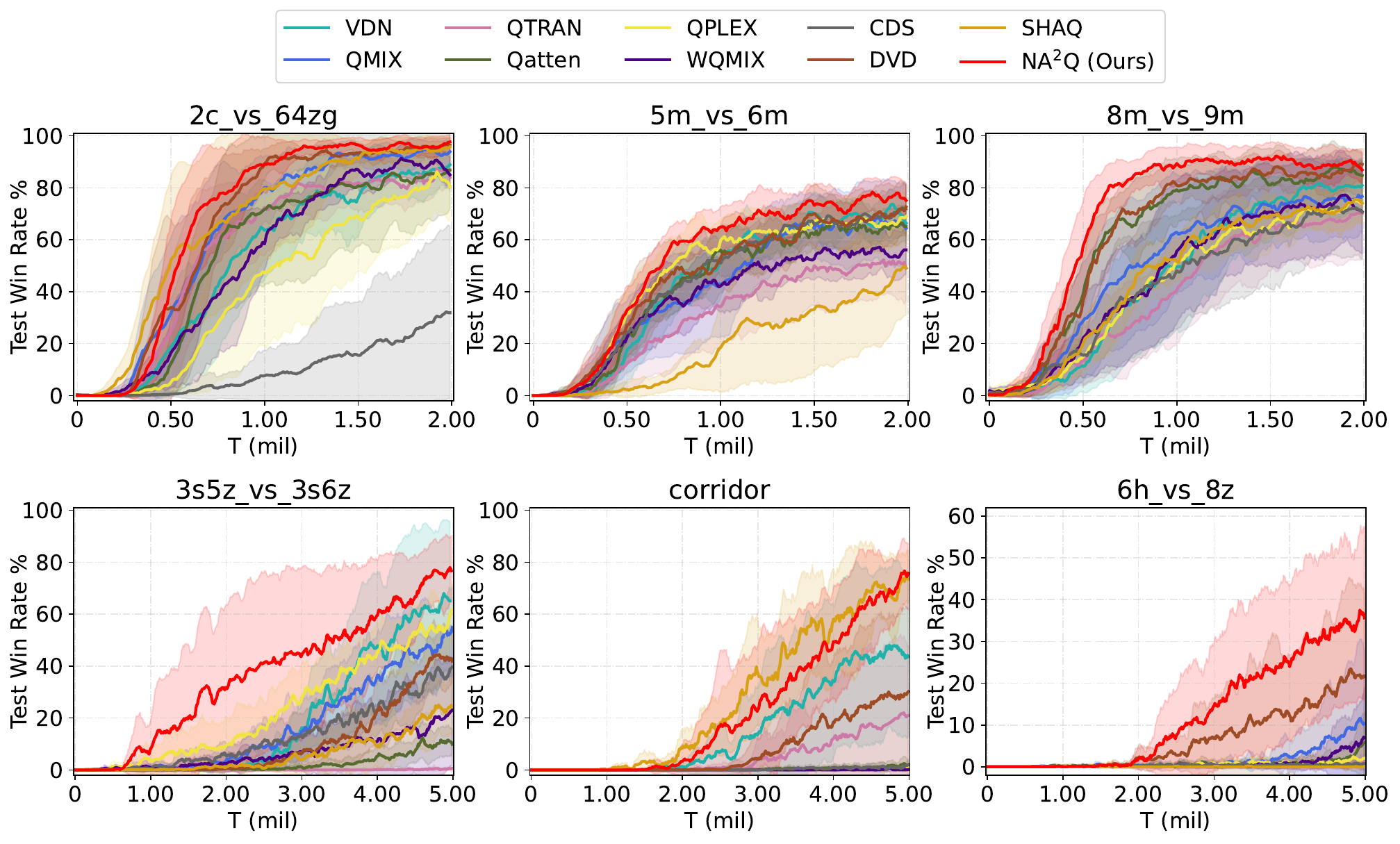}}
		\vskip -0.05in
		\caption{Test win rate \% on hard (first row), and super hard (second row) maps of SMAC benchmark.}
		\label{overview_results_main}
	\end{center}
	\vskip -0.20in
\end{figure*}

\begin{figure}[ht]
	\begin{center}
		\centerline{\includegraphics[width=\columnwidth]{./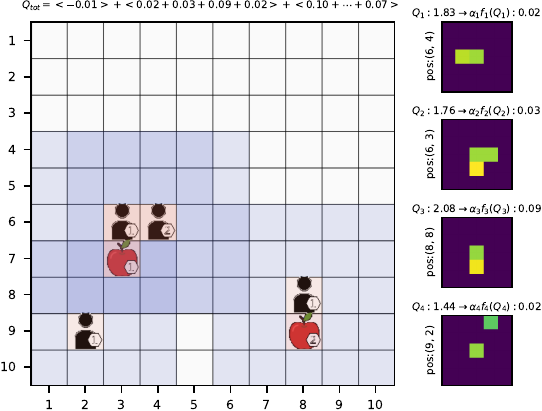}}
		\vskip -0.05in
		\caption{Visualization of the agent's mask at step~4, and the title indicates the corresponding credit assignment. The highlighted areas are the important regions for making decisions.}
		\label{inp}
	\end{center}
	\vskip -0.3in
\end{figure}

\begin{figure*}[ht]
	\centering
	\begin{minipage}{.67\textwidth}
		\centering
		\vskip -0.03in
		\subfigure[N$\text{A}^\text{2}$Q: $\varepsilon$-greedy]{
			\includegraphics[width=.32\linewidth]{./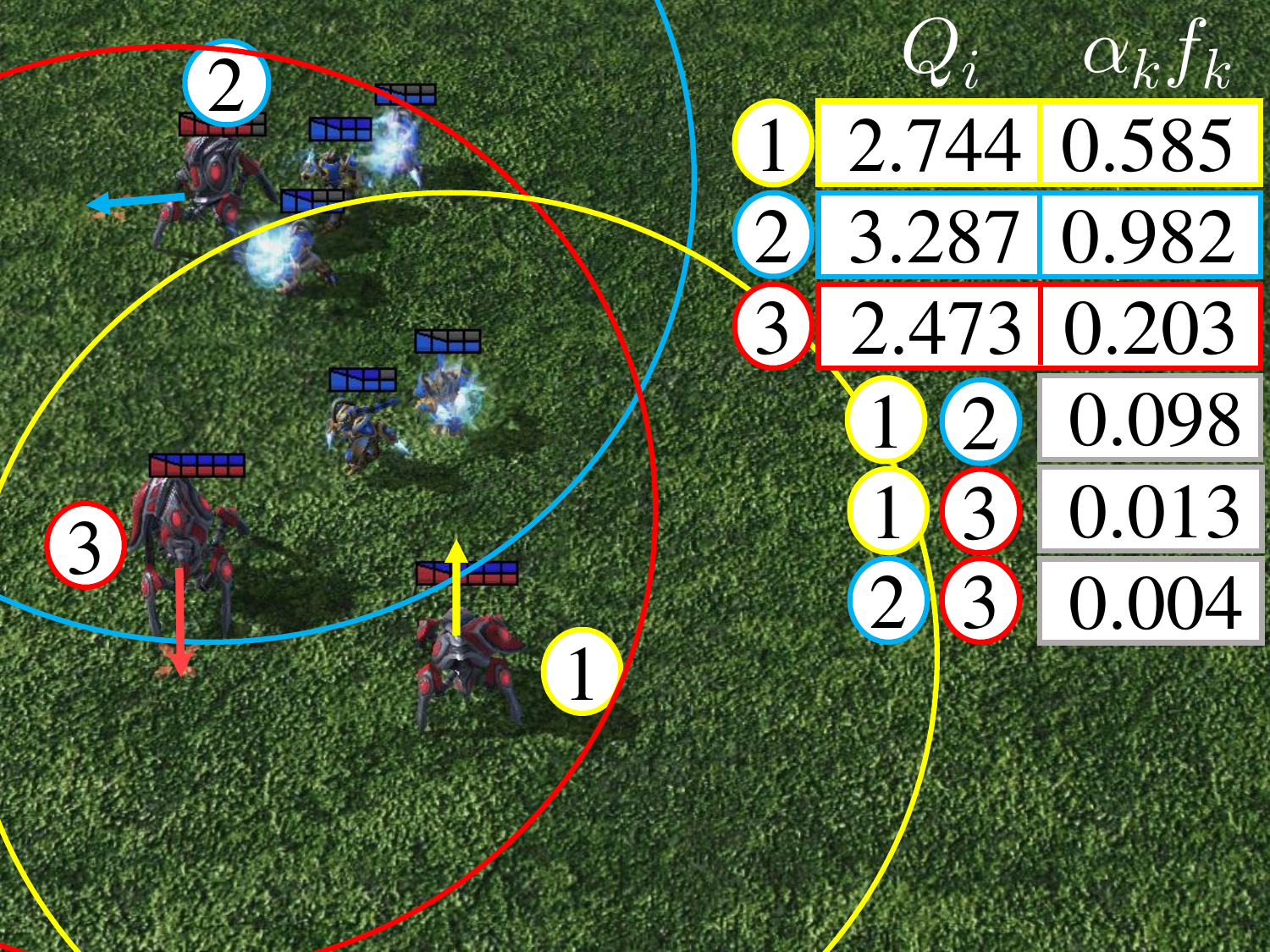}}
		\subfigure[VDN: $\varepsilon$-greedy]{
			\includegraphics[width=.32\linewidth]{./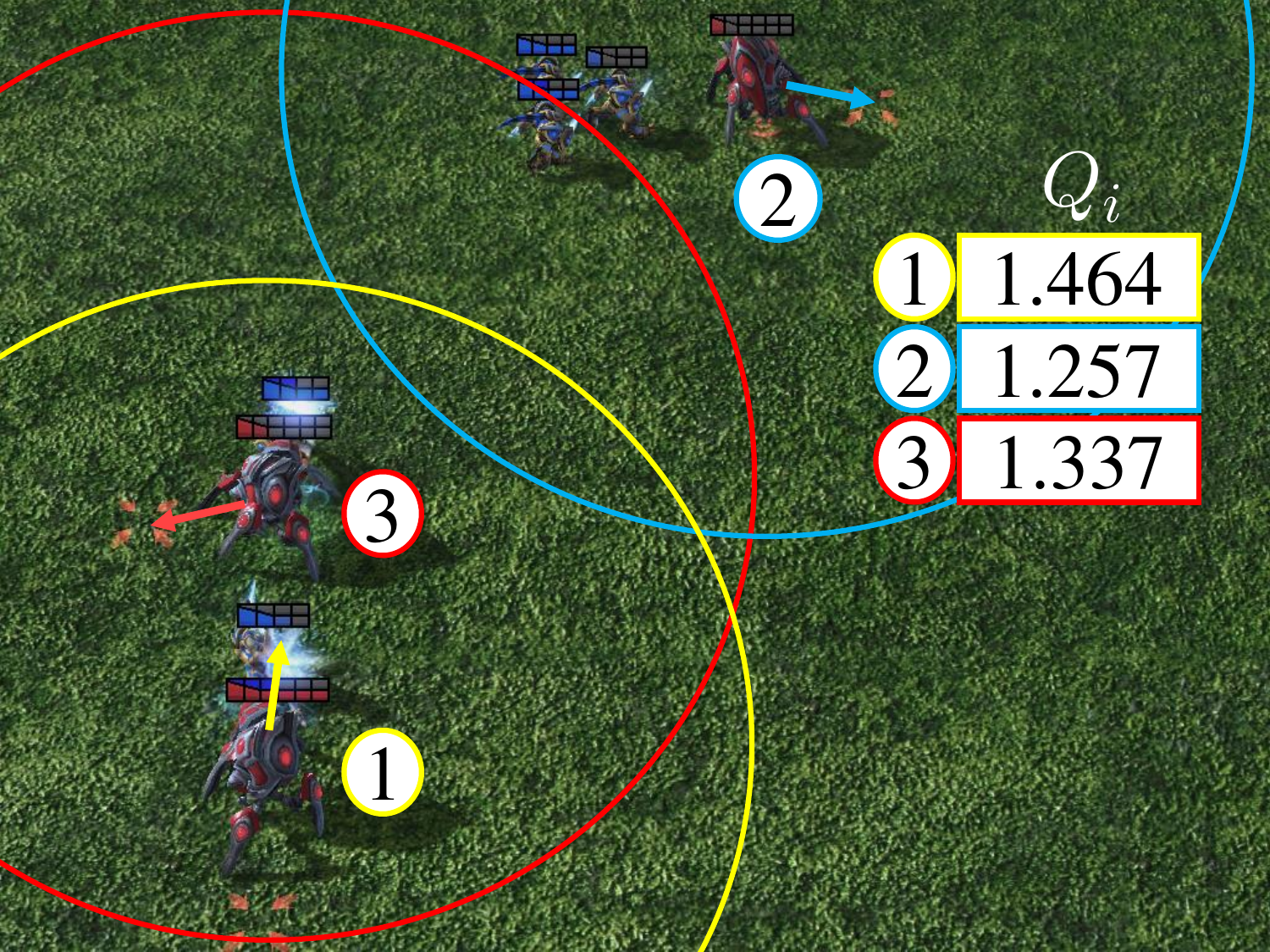}}
		\subfigure[QMIX: $\varepsilon$-greedy]{
			\includegraphics[width=.32\linewidth]{./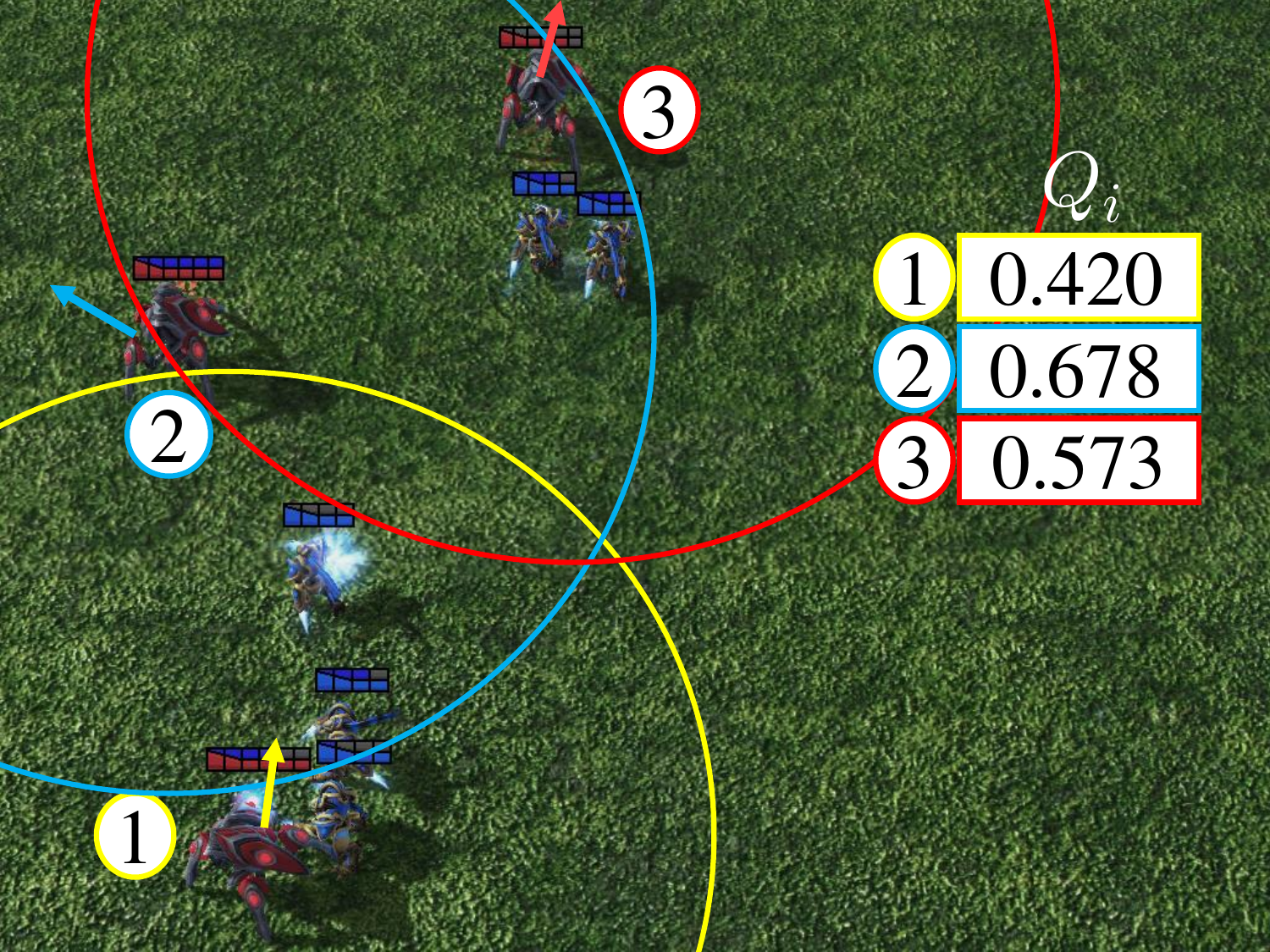}}\\
		\subfigure[N$\text{A}^\text{2}$Q: greedy]{
			\includegraphics[width=.32\linewidth]{./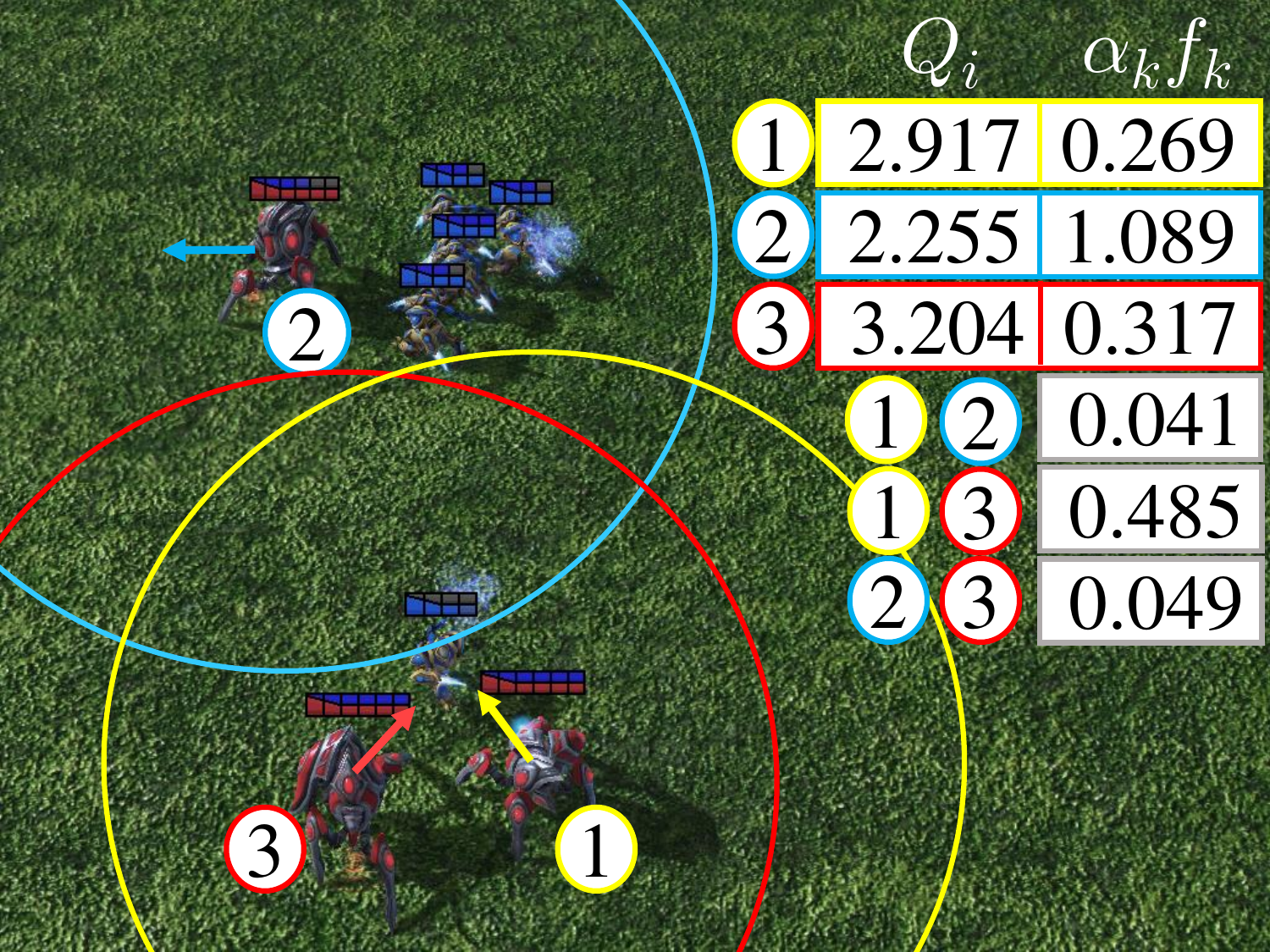}}
		\subfigure[VDN: greedy]{
			\includegraphics[width=.32\linewidth]{./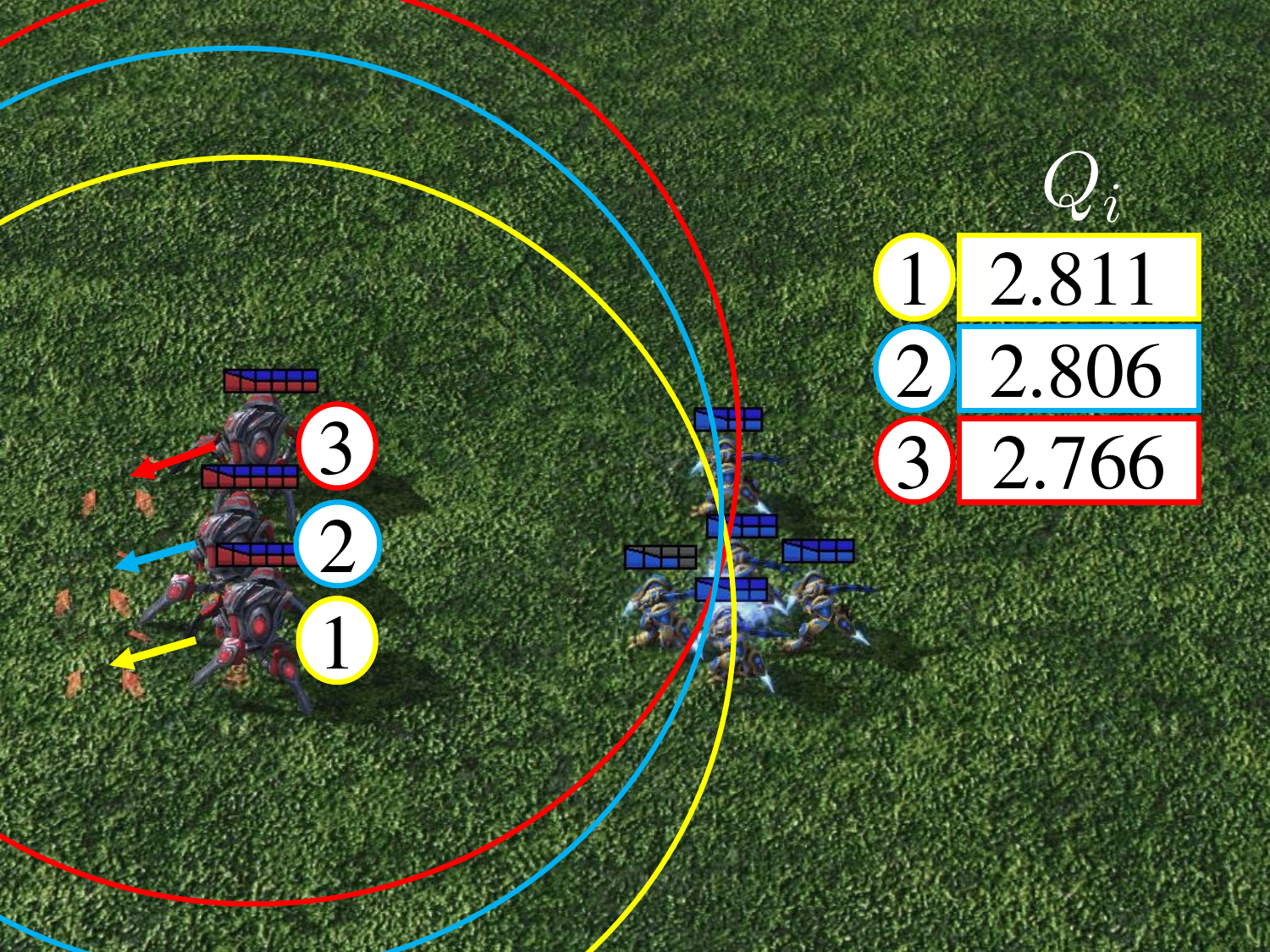}}
		\subfigure[QMIX: greedy]{
			\includegraphics[width=.32\linewidth]{./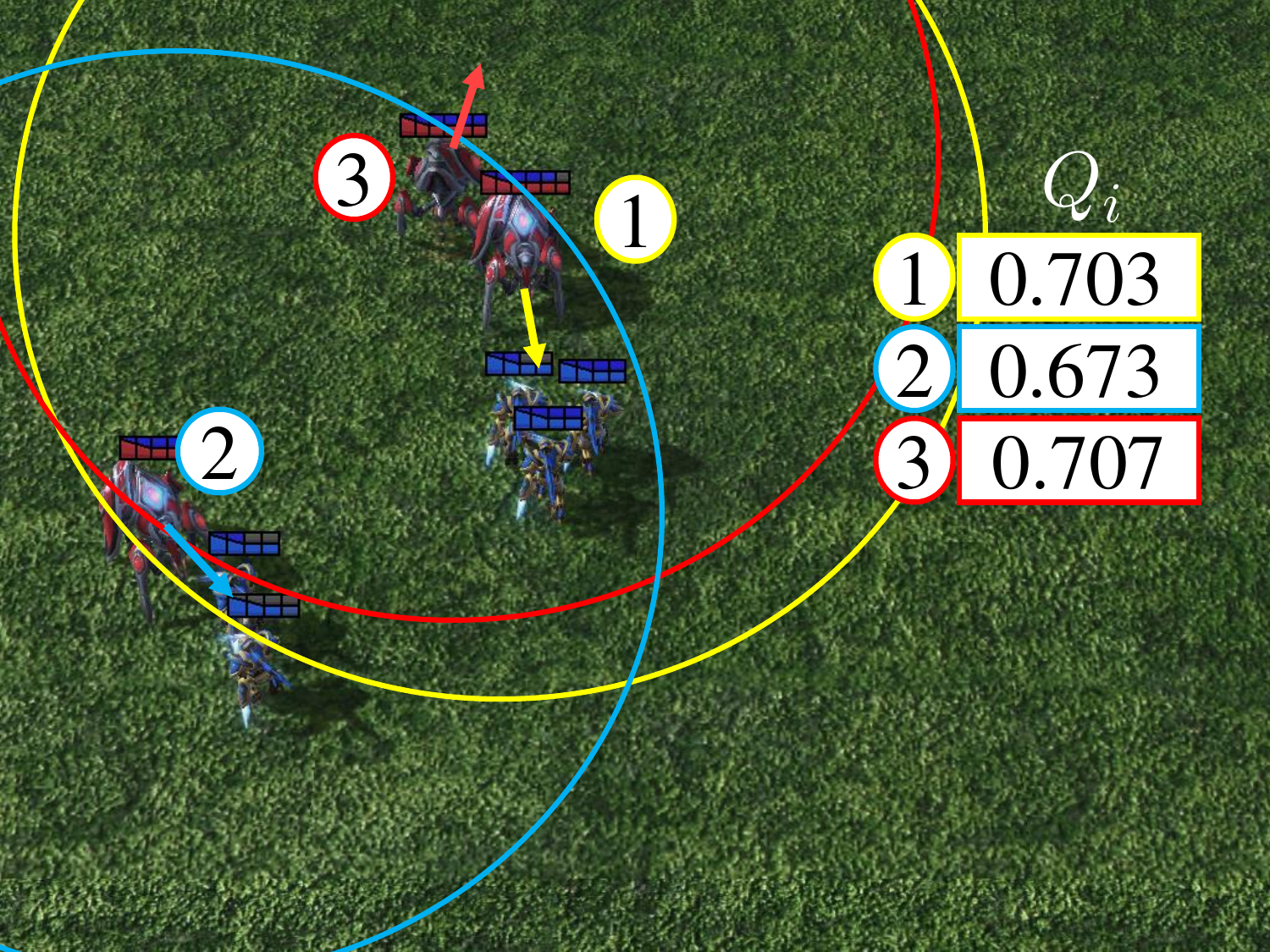}}
		\vskip -0.1in
		\caption{Visualization of evaluation for N$\text{A}^\text{2}$Q and baselines on 3s\_vs\_5z map. 
			Different colored circles indicate the corresponding central attack range, while arrows indicate movement or attack direction. 
			Each decomposed Q-value is displayed at the top-right, and for N$\text{A}^\text{2}$Q, we report the contribution of its unary and pairwise shape functions to the team.
			This shows that the values of VDN and QMIX are difficult to explain, while the Q-values of the decomposition by N$\text{A}^\text{2}$Q intend to correspond more clearly to the actions.
		}
		\label{fig2}
	\end{minipage}
	\hspace{.1in}
	\begin{minipage}{.3\textwidth}
		\centering
		\includegraphics[width=\linewidth]{./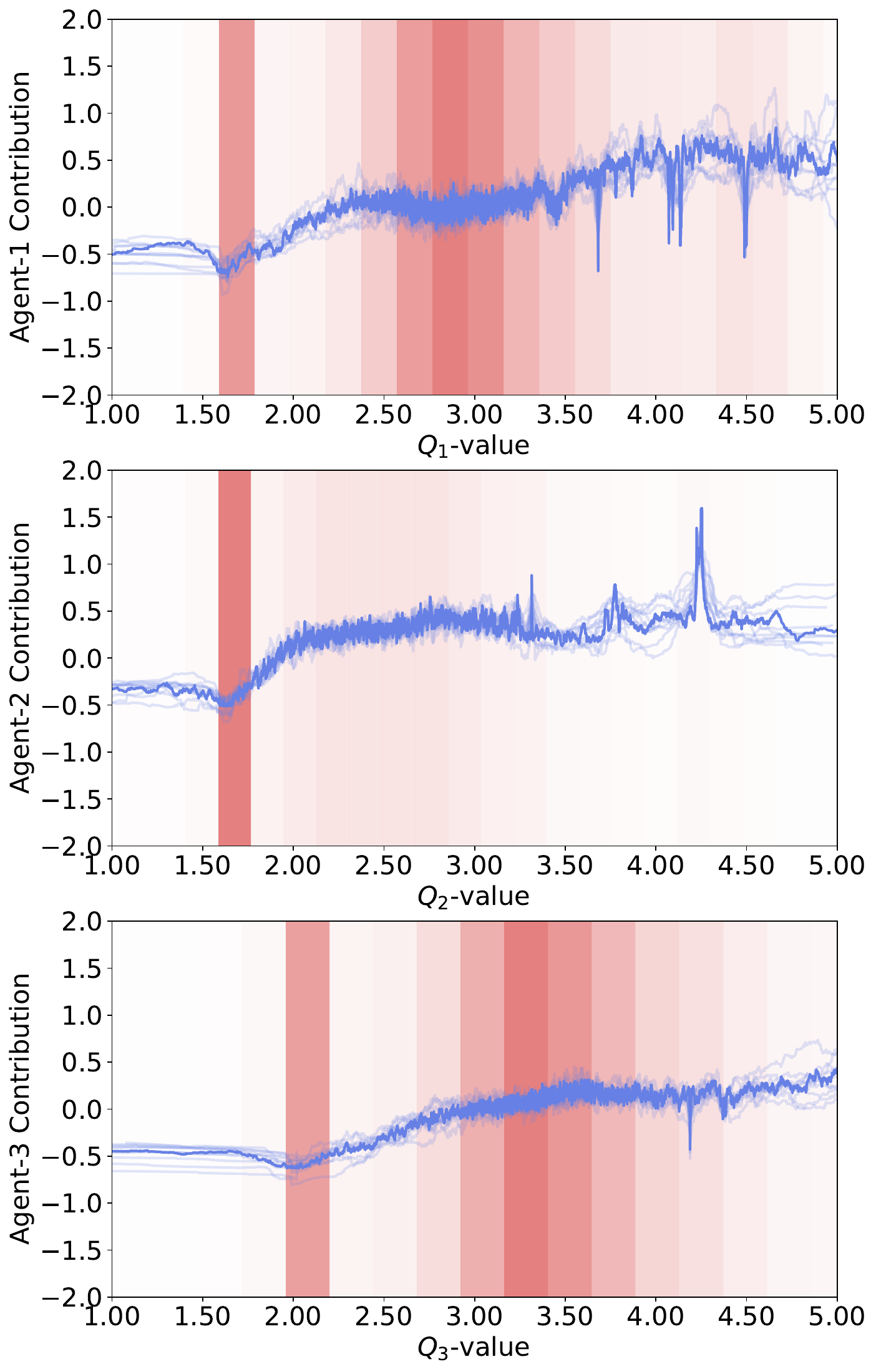}
		\vskip -0.2in
		\caption{Learned shape function $f_i$ by trained N$\text{A}^\text{2}$Q on 3s\_vs\_5z scenario. As expected, individual Q-values increase with the contribution of the agent.}
		\label{fig1}
	\end{minipage}
	\vskip -0.1in
\end{figure*}

\textbf{Performance on LBF.} 
\textit{Figure~\ref{overview_results_lbf}} shows the performance comparison against baselines on two constructed tasks of LBF. 
Our method achieves competitive performance in LBF tasks, demonstrating its efficiency across a range of scenarios. 
The failure of CDS may be due to the inability of diverse agents to explore collaborative strategies.
VDN, QMIX, and QTRAN require more steps to discover sophisticated policies, indicating that they are in trouble due to the limitations of representing spurious relationships between credits and decomposed Q-values.
QPLEX receives a lower reward than N$\text{A}^\text{2}$Q before $0.5M$ timesteps, which may need more time steps to explore since the complex network architecture.
Compared to QMIX, DVD obtains improved performance since it utilizes the de-confounded training mechanism.
N$\text{A}^\text{2}$Q achieves slightly higher performance than Qatten.
It implies that considering higher-order interactions and fine-grained learning semantics can promote credit assignment and correctly guide decentralized agents.
Compared to WQMIX and SHAQ, N$\text{A}^\text{2}$Q achieves nearly the same performance with better robustness. 
The reason may be that providing a reasonable inference path for credit assignment can assist in improving coordination.


\textbf{Interpretability of N$\text{A}^\text{2}$Q.}
To verify that N$\text{A}^\text{2}$Q possesses interpretability, we show its behavior matches that of corresponding agents on LBF.  
\textit{Figure~\ref{inp}} illustrates the small regions that each agent focuses on, and the headings labeled to show its credits.
It is evident that each agent captures task-relevant semantic information (the highlighted areas in the heatmaps) to make decisions. 
Specifically, agents pay more attention to the food position within their sight range, and tend to cooperate with teammates when the level of food is higher than themselves. 
Agents 1 and 2 only obtain credit assignments with $0.02$ and $0.03$, respectively, however, their pairwise shape function $f_{12}$ obtains high credit with $0.10$.
This implies that they have captured the cooperative skill for eating the food, which should be attributed to considering the different orders of the coalition of agents in designing the value decomposition mechanism.
Indeed, this semantic interpretation is also consistent with human visual patterns~\cite{greydanus2018visualizing} that tend to focus selectively on parts of the visual space and form collaborative relationships. 
We likewise show the interpretation for the whole episode, which can be found in Appendix~\ref{LBF}.

\subsection{StarCraft Multi-Agent Challenge}
Further, to broadly compare the performance of N$\text{A}^\text{2}$Q with baselines, we conduct experiments on the more challenging SMAC benchmark, which is a commonly used testbed for MARL algorithms. 
At each timestamp, each agent receives local observations and then obtains a global reward after making a move or attacking its enemies.
We compare the performance of N$\text{A}^\text{2}$Q with other baselines on $12$ different scenarios, including easy, hard, and super hard scenarios.
The details of these scenarios can be found in Appendix~\ref{4fafas}.

\textbf{Performance on SMAC.} 
The experimental results for different scenarios are shown in \textit{Figure~\ref{overview_results_main}}.
We can find that N$\text{A}^\text{2}$Q could consistently gain almost the best performance on all scenarios, especially on the super hard tasks.
QTRAN does not yield satisfactory performance, which may be due to the relaxation in practice that is insufficient for challenging domains.
Both baseline VDN and QMIX can achieve satisfactory performance on some easy or hard maps, such as 5m\_vs\_6m, but in super hard maps they fail to well solve the tasks.
Intuitively, super hard scenarios require more coordination skills, while their mixing network hardly captures the different interaction relationships among agents.
Similarly, QPLEX and WQMIX do not perform well despite relaxed restrictions on the joint value function, which may contribute to inefficient value decomposition without considering the local semantics.
Qatten falls short in satisfactory performance on super hard tasks, which implies that the lack of finely learned individual semantics brings about a spurious correlation between $s$ and $Q_{tot}$ and thus limits performance.
One possible reason for CDS not performing as well as reported by~\citet{li2021celebrating} is that paying more attention to policy diversity leads to instability during the learning stage, especially in less-agent maps, e.g., 2c\_vs\_64zg.
SHAQ only achieves comparable performance with N$\text{A}^\text{2}$Q in the corridor map, which seems to have difficulty adapting to all scenarios.
The reason could be that SHAQ ignores high-order interactions among agents.
DVD only attains comparable performance with N$\text{A}^\text{2}$Q on 2c\_vs\_64zg map, and struggles to achieve competitive performance on the other scenarios, probably due to the fact that it neglects to explicitly consider high-order interactions among agents.
In particular, for super hard task 6h\_vs\_8z, N$\text{A}^\text{2}$Q still maintains superior performance, while almost all the baselines are unable to learn efficient strategies.
It validates that enriching shape functions for estimating credits over each agent and the coalition of agents can boost efficient value decomposition.
In summary, our approach achieves impressive performance on all scenarios, showing the advantage of N$\text{A}^\text{2}$Q with attentive design.
More empirical results can be referred to \textit{Figure~\ref{overview_results_app}} in Appendix~\ref{Extra}.

\begin{figure*}[ht]
	\centering 
	\subfigcapskip=-5pt 
	\subfigure[Number of interactive order terms]{
		\includegraphics[width=\linewidth]{./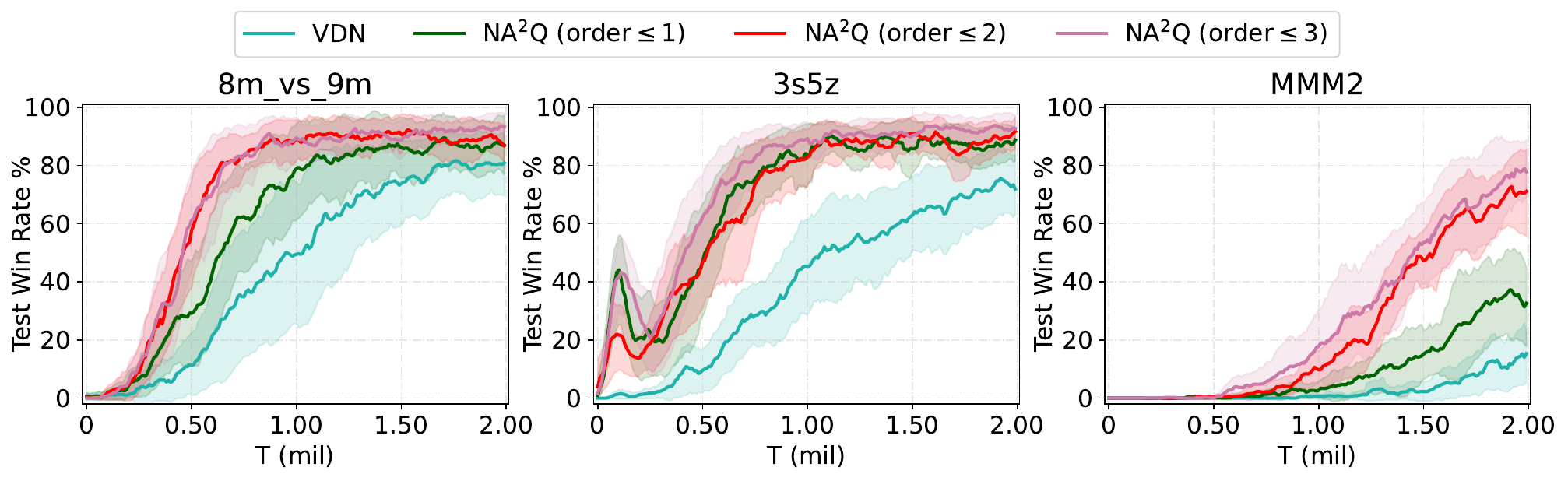}}\label{ablation1}
	\vskip -0.03in
	\subfigure[Influence of identity semantics and attention]{
		\includegraphics[width=\linewidth]{./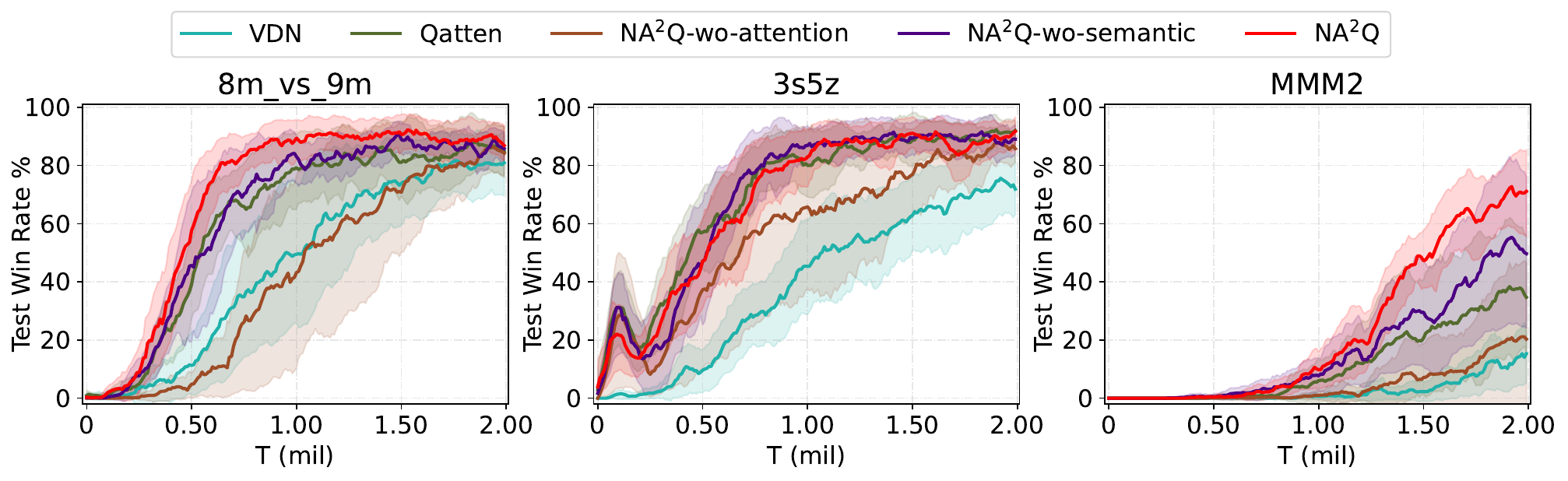}}\label{ablation2}
	\vskip -0.13in
	\caption{Ablation studies of N$\text{A}^\text{2}$Q on SMAC benchmark.}
	\vskip -0.15in
	\label{ablation}
\end{figure*}

\textbf{Interpretability and Stability.}
To intuitively show the interpretability of N$\text{A}^\text{2}$Q, we display some keyframes on 3s\_vs\_5z scenario as shown in \textit{Figure~\ref{fig2}}. 
We first consider the suboptimal action by $\varepsilon$-greedy. 
As seen from \textit{Figure~\ref{fig2}(a)}, Agent-3 escapes from its teammates and receives a lower contribution relative to the allies, which can be understood as meaning that it does not contribute to the team.
Meanwhile, N$\text{A}^\text{2}$Q can also provide pairwise contributions among agents, whose contributions are close to $0$ when they do not collaborate. 
However, it is hard to explicitly understand the behavior of VDN and QMIX from their Q-values.
As for optimal actions of N$\text{A}^\text{2}$Q, Agent-2 plays an essential role in kiting enemies at this time step and obtains a higher contribution of $1.089$, which is a crucial trick to victory that the agent can learn how to kite the enemies effectively~\cite{samvelyan2019starcraft}.
An interesting finding is that Agent-1 and Agent-3 siege the enemy and their coalition contribution is remarkably higher than other coalitions.
This shows the advantage of considering different orders of interactions among agents, which can facilitate deducing the contributions of each agent in value decomposition.
Whereas VDN produces the same action and does not possess an explicit interpretation as N$\text{A}^\text{2}$Q since it only considers order-$1$ for credit assignment.
For QMIX, the agents' behaviors are also difficult to understand because the Q-values are roughly equal.
A detailed description of N$\text{A}^\text{2}$Q about observation semantics and agent contributions is deferred to Appendix~\ref{maskSMAC}.

The interpretability of models is intrinsically coupled to their stability.
To assess stability, we evaluate $10$ models with different random seeds on 3s\_vs\_5z scenario, collecting $32$ rounds of interaction data for each model and plotting the shape functions with semi-transparent lines.
As shown in~\textit{Figure~\ref{fig1}}, we display the average contribution of each shape function, where the blue lines and pink bars indicate the contribution of each agent to the team and the Q-value distribution, respectively, where a bar with an intense color means larger samples located here.
As expected, the crimson area on the leftmost side represents its Q-value when an agent died, which means the agent had a lower contribution to the team.
Most samples of Agent-1 and Agent-3 are gathered around $3.00$ with larger positive contribution values, which implies that they spend more time step attacking enemies.
We also find that most samples of Agent-2 are gathered with a negative contribution value.
The reason may be that Agent-2 pays more attention to kiting the enemies, causing deaths to always occur earlier and having more dead Q-values.
Moreover, we compute the standard deviation of the plotted shape functions to be $0.124$, and the shape functions do not appear to deviate significantly, even for a few data points (white/light areas).
This finding attests to the robustness and resilience of our novel value decomposition mechanism, further enhancing interpretability.

\subsection{Ablation Study}
To understand the impact of each component in the proposed method, we conduct ablation studies to answer the following questions: (a) How does the model's performance benefit from the number of interaction orders among agents? (b) How do identity semantics influence performance? (c) Whether the intervention function is rational for value decomposition?
To study component (a), we ablate shape functions for different order numbers named \textit{N$\text{A}^\text{2}$Q (order$\leq l$)} in Eq.~(\ref{sdada}).
Since higher-order interactions will decrease computational efficiency due to permutations, we select three different order interactions by setting $1\leq l\leq3$. 
To study components (b) and (c), \textit{N$\text{A}^\text{2}$Q-w/o-semantic} represents replacing identity semantics $\boldsymbol{z}$ with global state $\boldsymbol{s}$ in Eq.~(\ref{eq4}), and \textit{N$\text{A}^\text{2}$Q-w/o-attention} represents ablating attention mechanism for credit assignment, respectively.
Additionally, since VDN and Qatten can be seen as special order-$1$ shape functions, we take them as a baseline for comparison.

We carry out ablation studies on three hard and super hard scenarios, and present the results in \textit{Figure~\ref{ablation}}.
As shown in \textit{Figure~\ref{ablation}(a)}, N$\text{A}^\text{2}$Q achieves better performance as the number of order interactions increases, which validates the importance of considering higher-order interaction relationships among agents.
Why not then have the number of order interactions as large as possible? 
A potential drawback is that an excessive number of order interactions might hurt interpretability, as shape functions beyond pairs are harder to visualize.
Generally, moderate order terms (e.g., $l\leq 2$) are enough for an appropriate trade-off between performance improvement and interpretability. 
In \textit{Figure~\ref{ablation}(b)}, the ablation of each part of our intervention function brings a noticeable decrease in performance.
Specifically, the performance of \textit{N$\text{A}^\text{2}$Q-w/o-attention} and VDN decreases, which indicates that the global state information is beneficial to estimate the credit assignment. 
Besides, the performance of \textit{N$\text{A}^\text{2}$Q-w/o-attention} is slightly higher than VDN because it considers more possible interactions among agents, leading to more capabilities than linear order-1 interactions.
\textit{N$\text{A}^\text{2}$Q-w/o-semantic} performs slightly worse than N$\text{A}^\text{2}$Q, which indicates the fine learning identity semantics own the greater representation ability to keep track of the feature influence of each agent.
Additionally, the performance of \textit{N$\text{A}^\text{2}$Q-w/o-semantic} is consistently superior to Qatten on a range of tasks, which implies that high-order interactions among agents can provide more capacity to search for efficient patterns of cooperation.
To summarize, N$\text{A}^\text{2}$Q that is conditioned on all parts gives the best performance while retaining interpretability, which improves flexibility and saves human labor.


\section{Conclusion}
In this paper, we present N$\text{A}^\text{2}$Q in the scope of value decomposition, which combines the inherent interpretability of GAMs, opening the door for other advances in the interpretability perspective of MARL.
N$\text{A}^\text{2}$Q allows for end-to-end training in a centralized fashion and models higher-order interactions to deduce precise credit for executing decentralized policies.
Moreover, we provide local semantic masks as evidence for decision-making.
The empirical results show that N$\text{A}^\text{2}$Q enjoys its interpretability and scalability while maintaining competitive performance.
We believe that our work proves a solid basis for further research and could catalyze the community's effort toward understanding cooperative tasks. 
A promising direction for future work is improving the performance of N$\text{A}^\text{2}$Q by considering higher-order coalitions of agents.
However, they might worsen the intelligibility of the learned N$\text{A}^\text{2}$Q with higher-order agent interactions, especially as the number of agents increases.
It would be interesting to explore an efficient representation for interpreting a large-scale agent system, such as clustering similar terms in the N$\text{A}^\text{2}$Q framework.

\section{Acknowledgements}
The work was supported by the National Natural Science Foundation of China under Grant 62073160.

\bibliography{example_paper}
\bibliographystyle{icml2023}


\newpage
\appendix
\onecolumn

\section{Credit Assignment for Value Decomposition Algorithms}\label{appendix1}
Previous work~\cite{li2022deconfounded} defined the general formula for credit assignment in value decomposition methods as
\begin{equation}
Q_{tot} = \sum_{k=1}^m \alpha_{k} \widehat{Q}_k,
\label{adsasdasd}
\end{equation}
where $\widehat{Q}_k$ is transformed as a temporal value by $f_k(\cdot)$ and $\alpha_{k}$ denotes a credit that expresses the contribution of the temporal value to the joint action-value $Q_{tot}$. 
This formula can be applied for generalization in widely investigated approaches of mixing networks based on value decomposition, next we introduce these methods in detail\footnote{For convenience, all bias networks are omitted if existing.}. 

The first method is VDN~\cite{sunehag2017value}, which seeks to learn a joint value function $Q_{tot}(\boldsymbol{\tau}, \boldsymbol{u})$ via equal credit assignment. It represents $Q_{tot}$ as the sum of all individual value functions as $Q_{tot} = \sum_{i=1}^n Q_i$ without the use of additional state information., where Eq.~(\ref{adsasdasd}) can be rewritten when $m = n$, $\alpha_{k}=1$, and $\widehat{Q}_i=f_i(Q_i)$. 

More common algorithms transform the local Q-values into the temporal Q-values via the global state $\boldsymbol{s}$. For example, QMIX~\cite{rashid2018qmix} can be represented by a monotonic neural network $f_{1 \ldots n}(\cdot)$ with the global state $\boldsymbol{s}$ as
\begin{equation*}
[\widehat{Q}_k]_{k=1}^m = f_{1 \ldots n}(Q_1, \ldots, Q_n), \frac{\partial f_{1 \ldots n}}{\partial Q_i}>0,
\end{equation*}
where $k\in\{1,\cdots,m\}$ denotes the embedding number. Then the credit $\alpha_k(\boldsymbol{s})$ is calculated by another monotonic neural network and utilized in Eq.~(\ref{adsasdasd}). Some methods that improve on QMIX, e.g, Qatten~\cite{yang2020qatten} replace the neural network $f_{1 \ldots n}$ into an attention mechanism, Weighted QMIX~\cite{rashid2020weighted} uses different weights on TD error, and CDS~\cite{li2021celebrating} improves diversity among agents by constructing intrinsic rewards.

Further, QPLEX~\cite{wang2020qplex} combines QMIX and VDN in a dueling mixing network as
\begin{equation*}
Q_{tot} = \sum_i^n \alpha_{i}\widehat{Q}_i + \alpha_{1 \ldots n}f_{1 \ldots n}(Q_1, \ldots, Q_n), \frac{\partial f_{1 \ldots n}}{\partial Q_i}>0,
\end{equation*}
where $\widehat{Q}_i=Q_i$ represents the local temporal value and $f_{1 \ldots n}$ represents the advantage function to get $\widehat{Q}_{1 \ldots n}$, which also uses an attention mechanism. Therefore, it is equivalent to Eq.~(\ref{adsasdasd}) when $m=n+1$ and $\alpha_k\in\{\alpha_1,\cdots, \alpha_n, \alpha_{1 \ldots n}\}^m$.
It is straightforward to notice that QPLEX is the sum of term $\normalsize{\textcircled{\scriptsize{1}}}\normalsize$ and term $\normalsize{\textcircled{\scriptsize{2}}}\normalsize$ in Eq.~(\ref{sdada}).

The last method SHAQ~\cite{wang2021shaq} improves the credit assignment of QMIX via Shapley theory for interpretation, which can also be expressed by Eq.~(\ref{adsasdasd}).

\section{Approximation Guarantees for N$\text{A}^\text{2}$Q}\label{app2}
Inspired by non-linear GAMs, e.g., NAM~\cite{agarwal2021neural} and SPAM~\cite{dubey2022scalable}, we modify the decomposition of ${Q} = [Q_i]_{i=1}^n\in \mathcal{Q}$ in Eq.~\ref{sdada} by rewriting the order number $1\leq l\leq n$ with the shape functions as 
\begin{equation}
Q_{tot} = f_0+ \lambda _{1d} \cdot  \left<\boldsymbol{a}_{1d}, F_1({Q})\right>+\sum_{d=1}^{\rho_2}\lambda _{2d} \cdot \left<\boldsymbol{a}_{2d}, F_2({Q})\right>^2+\cdots+\sum_{d=1}^{\rho_n}\lambda _{nd} \cdot  \left<\boldsymbol{a}_{nd}, F_n({Q})\right>^n,
\label{keyeq}
\end{equation}
where $\{\lambda_{ld}\}_{d=1}^{\rho_l}$ and $\{\boldsymbol{a}_{ld}\}_{d=1}^{\rho_l}$ are the corresponding eigenvalues and bases for credit matrix $\boldsymbol{\alpha}_l = \left\{ \alpha_{\mathcal{D}_l} \right\}$ to represent the order-$l$ interactions between all non-empty subsets of $l\in \mathcal{N}$, 
$\rho_l\in \{1,\rho_2,\cdots,\rho_n\}$ denotes the
rank of the tensor, and the function $F_l({Q}) = [f_{l1}(\cdot), f_{l2}(\cdot), \cdots, f_{ln}(\cdot)]\in \mathcal{F}_l$
is a family of shape functions in the order-$l$. 
Next, we present learning-theoretic and approximation guarantees for this type of enrichment, with a more precise regret bound.

\begin{assumption}\label{ass1}
	($\eta $ - Exponential Spectral Decay of Approximation.) For the family of all  decomposition  $Q\in \mathcal{Q}$ as outlined in Eq.~(\ref{keyeq}), we assume that there exist absolute constants $C_1 < 1$ and $C_2 = \mathcal{O}(1)$ such that $\lambda_{ld} \leq C_1 \exp(-C_2\cdot d^\eta )$ for each $l\in \mathcal{N}$ and  $d\geq 1$.
\end{assumption}
Assumption~\ref{ass1} provides a soft threshold for singular value decay, i.e., implying that only a few decay degrees of freedom are sufficient to accurately approximate $f_k$. 
We consider the general results under the \textit{1-Lipschitz} loss approximated by this enrichment decomposition of the metric regret bound. 
Let us denote the Taylor expansion decomposition in Eq.~(\ref{eqqqq}) as $Q_{tot}(\boldsymbol{\tau}, \boldsymbol{u}):\mathcal{Q}\to\overline{\mathcal{Y}}$.
Thus, we aim to bound the expected risk in Eq.~(\ref{eqqqq}) with the empirical risk in Eq.~(\ref{sdada}) to demonstrate that learning an enrichment decomposition method does not incur a  larger error compared with learning the Taylor expansion.
At a high level, for any function $Q_{tot}(\boldsymbol{\tau}, \boldsymbol{u}):\mathcal{Q}\to  \mathcal{Y}$ and bounded \textit{1-Lipschitz} loss $\ell: \mathcal{Y}\times \mathcal{Y}\to [0, 1]$, the empirical risk over $b$ samples from $\mathcal{B}$ as $\widehat{\mathcal{L}}_b(Q_{tot}(\boldsymbol{\tau}, \boldsymbol{u}))=\frac{1}{b}\sum_{j=1}^b\ell(Q_{tot}, y)$. We donate $\widehat{Q}_{tot}$ as the \textit{empirical risk minimizer}, then,
\begin{equation}
\widehat{Q}_{tot} ={\arg \min}_{Q_{tot}\in \mathcal{Y}} \widehat{\mathcal{L}}_b(Q_{tot}(\boldsymbol{\tau}, \boldsymbol{u})).
\end{equation}
Similarly, the expected risk can be given, over the sample distribution $\mathfrak{P}$ as $\mathcal{L}(Q_{tot}(\boldsymbol{\tau}, \boldsymbol{u}))=\mathbb{E}_{([Q_i]_{i=1}^n, y)\sim \mathfrak{P}}[\ell(Q_{tot}, y)]$. Then we have that the optimal \textit{expected risk minimizer} $Q_{tot}^{\star}$ as 
\begin{equation}
Q_{tot}^{\star} ={\arg \min}_{Q_{tot}\in \overline{\mathcal{Y}}} \mathcal{L}(Q_{tot}(\boldsymbol{\tau}, \boldsymbol{u})).
\label{eq5sss}
\end{equation}
Our preparation is complete, so we can now discuss the regret bound for our generalization. We state the full Theorem here.

\begin{theorem}\label{theorem}
	Let $\ell $ be 1-Lipschitz, $\delta \in (0, 1]$ and Assumption~\ref{ass1} hold with constants $\{C_1, C_2, \eta\}$. 
	Then, for $L_1$-norm models, where $\left\| \boldsymbol{a}_{ld} \right\|_1\leq B_a, 1\leq l \leq n$, and $\left\| \boldsymbol{\lambda} \right\|_1\leq B_\lambda$ where $\boldsymbol{\lambda}=\{\{\lambda_{ld}\}_{d=1}^{\rho_l}\}_{l=1}^n$, 
	there exists some absolute constants $\{C_1, C_2\}$ with probability at least $1-\delta, \delta\in(0, 1]$ that we have 
	\begin{equation}
	\mathcal{L}(\widehat{Q}_{tot}) - \mathcal{L}(Q_{tot}^{\star}) \leq  2B_\lambda \cdot \left ( \sum_{l=1}^n (B_a)^l \right )\sqrt{\frac{\log(n)}{b}}+
	\frac{C_1}{C_2}\cdot \left (\sum_{l=1}^{n} \exp (-\rho_l^\eta  )\right )+2(\sqrt{2}+1)\cdot\sqrt{\frac{\log(2/\delta )}{b}}.
	\label{eq111111}
	\end{equation}
\end{theorem}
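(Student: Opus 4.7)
My plan is to follow the classical estimation--approximation decomposition for empirical risk minimization, coupling a Rademacher complexity argument for the statistical error with the spectral decay of Assumption~\ref{ass1} for the approximation error. Let $\mathcal{F}$ denote the hypothesis class of decompositions of the form in Eq.~(\ref{keyeq}) that satisfy the stated $L_1$ constraints, let $\bar Q_{tot}\in\arg\min_{Q\in\mathcal{F}} \mathcal{L}(Q)$ be the best-in-class function, and write
\begin{align*}
\mathcal{L}(\widehat Q_{tot}) - \mathcal{L}(Q_{tot}^{\star})
&= \underbrace{\bigl[\mathcal{L}(\widehat Q_{tot}) - \widehat{\mathcal{L}}_b(\widehat Q_{tot})\bigr]}_{\text{(A)}}
+ \underbrace{\bigl[\widehat{\mathcal{L}}_b(\widehat Q_{tot}) - \widehat{\mathcal{L}}_b(\bar Q_{tot})\bigr]}_{\leq\,0 \text{ by ERM}} \\
&\quad + \underbrace{\bigl[\widehat{\mathcal{L}}_b(\bar Q_{tot}) - \mathcal{L}(\bar Q_{tot})\bigr]}_{\text{(B)}}
+ \underbrace{\bigl[\mathcal{L}(\bar Q_{tot}) - \mathcal{L}(Q_{tot}^{\star})\bigr]}_{\text{approximation}}.
\end{align*}
The middle bracket is non-positive, so I only need to control uniform deviation on $\mathcal{F}$ (bounding (A) and (B) simultaneously) together with the approximation error of $\mathcal{F}$ relative to $\overline{\mathcal{Y}}$.

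For the statistical error, standard symmetrization combined with McDiarmid's bounded-differences inequality yields, with probability at least $1-\delta$, $\sup_{Q\in\mathcal{F}}|\mathcal{L}(Q)-\widehat{\mathcal{L}}_b(Q)| \leq 2\,\mathcal{R}_b(\ell\circ\mathcal{F}) + \sqrt{\log(2/\delta)/(2b)}$. Since $\ell$ is 1-Lipschitz, the Ledoux--Talagrand contraction principle gives $\mathcal{R}_b(\ell\circ\mathcal{F})\leq \mathcal{R}_b(\mathcal{F})$. A member of $\mathcal{F}$ is a $\lambda$-convex combination (in the $L_1$ sense, radius $B_\lambda$) over order-$l$ homogeneous forms $\langle \boldsymbol{a}_{ld}, F_l(Q)\rangle^l$ with $\|\boldsymbol{a}_{ld}\|_1\leq B_a$. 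The Rademacher complexity of the $L_1$-ball of radius $R$ in $n$ coordinates is $O(R\sqrt{\log(n)/b})$, and contraction for the scalar map $x\mapsto x^l$ on $[-B_a,B_a]$ carries a Lipschitz factor that, when composed with the $L_1$ envelope, yields a per-order contribution of $B_\lambda\,B_a^{l}\sqrt{\log(n)/b}$ up to absolute constants. Summing the bound over $l=1,\dots,n$ and applying a union bound over (A) and (B) with mass $\delta/2$ each produces the first and third summands on the right-hand side of Eq.~(\ref{eq111111}).

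For the approximation error, since $Q_{tot}^{\star}$ admits the full polynomial expansion indexed by $(l,d)$ with coefficients $\lambda_{ld}$, while $\bar Q_{tot}\in\mathcal{F}$ truncates each order-$l$ tensor at rank $\rho_l$, the pointwise gap is at most $\sum_{l=1}^n\sum_{d>\rho_l}\lambda_{ld}\,|\langle \boldsymbol{a}_{ld},F_l(Q)\rangle|^l$. Using the envelope $|\langle\boldsymbol{a}_{ld},F_l(Q)\rangle|^l\leq B_a^l$ (absorbed into constants), 1-Lipschitzness of $\ell$, and Assumption~\ref{ass1}, this reduces to bounding $\sum_{l=1}^n\sum_{d>\rho_l}C_1\exp(-C_2 d^\eta)$. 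A standard integral comparison $\sum_{d>\rho_l}\exp(-C_2 d^\eta)\leq C_2^{-1}\exp(-\rho_l^\eta)$ (after absorbing $C_2$ into the exponent via the monotonicity of $x\mapsto x^\eta$) delivers the second summand $(C_1/C_2)\sum_l\exp(-\rho_l^\eta)$.

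The hardest step is the Rademacher bound on the order-$l$ homogeneous forms: unlike a linear function class, each summand is an $l$-th power of a sparse inner product, so the $L_1$-type complexity bound does not apply directly. I will peel the two layers separately: first use contraction with Lipschitz constant $l\,B_a^{l-1}$ for $x\mapsto x^l$, then bound the linear class $\{\langle \boldsymbol{a}_{ld},\cdot\rangle:\|\boldsymbol{a}_{ld}\|_1\leq B_a\}$ by $B_a\sqrt{2\log(n)/b}$, and finally compose with the $L_1$ simplex constraint $\|\boldsymbol{\lambda}\|_1\leq B_\lambda$ through a maximum-over-vertices argument. Propagating these three Lipschitz/envelope factors cleanly, so that the per-order constant assembles into exactly $B_\lambda (B_a)^l$ rather than an inflated $l\,B_\lambda (B_a)^l$, is the subtle bookkeeping; it relies on the fact that the maximum of an empirical process over an $L_1$ ball is attained at a single vertex, eliminating the spurious linear factor in $l$. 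Once that is handled, the remainder of the argument is routine assembly of the three displayed terms.
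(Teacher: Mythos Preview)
Your proposal is essentially the same strategy as the paper's proof: an ERM decomposition into statistical and approximation pieces, Rademacher complexity (via Lipschitz contraction and the $L_1$--Massart bound, citing the same Dubey et al.\ and Massart lemmas) for the former, and the spectral-decay tail sum with an integral comparison for the latter. Two small differences are worth flagging. First, the paper's comparator is not the in-class population minimizer $\bar Q_{tot}$ but the explicit rank-$\rho_l$ truncation $\widetilde Q_{tot}\in\mathcal{Y}$ of $Q^\star_{tot}$; this makes the approximation term literally the tail $\sum_{l}\sum_{d>\rho_l}\lambda^\star_{ld}\langle\cdot\rangle^l$, which is then bounded exactly as you describe. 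Second, and more consequential for the stated constants, the paper does \emph{not} control your term (B) by uniform deviation: since $\widetilde Q_{tot}$ is a fixed function, it applies Hoeffding directly to obtain $2\sqrt{\log(2/\delta)/b}$, reserving the one-sided Rademacher/McDiarmid bound solely for (A). That is exactly how the coefficient $2(\sqrt{2}+1)$ arises---$2\sqrt{2}$ from the McDiarmid term accompanying the Rademacher bound on (A), plus $2$ from Hoeffding on the fixed comparator. Your route of bounding (A) and (B) simultaneously via the uniform sup would double the Rademacher contribution to $4B_\lambda\bigl(\sum_l B_a^l\bigr)\sqrt{\log(n)/b}$ and alter the $\delta$-dependent constant; the fix is simply to use Hoeffding for (B), as the paper does.
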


\begin{proof}For the expected function $Q_{tot}^{\star}$, we also denote the corresponding eigenvalues as $\{\{\lambda_{ld}^{\star}\}_{d=1}^{\overline{\rho}_l}\}_{l=1}^n$ and bases as $\{\{\boldsymbol{a}^{\star}_{ld}\}_{d=1}^{\overline{\rho}_l}\}_{l=1}^n$.
	Consider the $\widetilde{Q}_{tot}\in \mathcal{Y}$ that is a ``truncated" version of the optimal  $Q_{tot}^{\star}$. Therefore, we can rewrite the regret bound as
	\begin{equation*}
	\mathcal{L}(\widehat{Q}_{tot}) - \mathcal{L}(Q_{tot}^{\star}) =
	\underbrace{\mathcal{L}(\widehat{Q}_{tot}) - \widehat{\mathcal{L}}_b(\widehat{Q}_{tot})}_{\normalsize{\textcircled{\scriptsize{1}}}} +
	\underbrace{\widehat{\mathcal{L}}_b(\widehat{Q}_{tot})- \widehat{\mathcal{L}}_b(\widetilde{Q}_{tot})}_{\leq 0 } + 
	\underbrace{\widehat{\mathcal{L}}_b(\widetilde{Q}_{tot})- \mathcal{L}(Q_{tot}^{\star})}_{\normalsize{\textcircled{\scriptsize{2}}}},
	\end{equation*}
	where the middle term $\widehat{\mathcal{L}}_b(\widehat{Q}_{tot})- \widehat{\mathcal{L}}_b(\widetilde{Q}_{tot})\leq 0$ since $\widehat{Q}_{tot}$ minimizes the empirical risk in Eq.~(\ref{eq5sss}). 
	Therefore, binding on terms $\normalsize{\textcircled{\scriptsize{1}}}$ and $\normalsize{\textcircled{\scriptsize{2}}}$ can provide us with a proof of the bound. 
	The bound for term $\normalsize{\textcircled{\scriptsize{2}}}$  is tractable, which can be proved via Lemma~\ref{lem}. Hence with probability at least $1-\delta, \delta\in(0, 1]$, we have that
	\begin{equation}
	\widehat{\mathcal{L}}_b(\widetilde{Q}_{tot})- \mathcal{L}(Q_{tot}^{\star}) \leq \sum_{l=1}^{n} \frac{C_1}{C_2}\cdot \exp(-\rho_l^\eta )+2\sqrt{\frac{\log(2/\delta )}{b}}.
	\label{lem111}
	\end{equation}
	
	Then inspired by \citet{radenovic2022neural}, we handle the term $\normalsize{\textcircled{\scriptsize{1}}}$ via bounding the Rademacher complexity~\cite{wainwright2019high}. 
	The loss function $\ell$ is Lipschitz and bounded, with probability at least $1-\delta$ for any $\delta\in(0, 1]$ over samples of length $b$. 
	These conditions allow us to apply Theorem~8 and Theorem~12 from \citet{bartlett2002rademacher}, whose proof uses McDiarmid's inequality. Thus we have that
	\begin{equation*}
	\mathcal{L}(\widehat{Q}_{tot}) - \widehat{\mathcal{L}}_b(\widehat{Q}_{tot})\leq \mathcal{R}_b(\ell\circ \mathcal{F}) + \sqrt{\frac{8\log(2/\delta )}{b}},
	\end{equation*}
	where $\mathcal{F}$ denotes the set of all joint value functions represented, i.e, $\forall Q_{tot}(\boldsymbol{\tau}, \boldsymbol{u}) \in \mathcal{F}$, and $\mathcal{R}_b$ is the empirical Rademacher complexity. According to the Theorem~12 from \citet{bartlett2002rademacher}, $\mathcal{R}_b(\ell\circ \mathcal{F})\leq 2L\cdot\mathcal{R}_b(\mathcal{F}) \leq 2L\cdot\sum_{l=1}^n \mathcal{R}_b(\mathcal{F}_l)$. Thus, we can put all the order terms together since $\ell$ is $L$-Lipschitz, and rewrite the above equation as
	\begin{equation*}
	\begin{split}
	\mathcal{L}(\widehat{Q}_{tot}) - \widehat{\mathcal{L}}_b(\widehat{Q}_{tot})
	\leq& 2L\cdot\sum_{i=1}^n \mathcal{R}_b(\mathcal{F}_i) + 2\sqrt{2}\cdot\sqrt{\frac{\log(2/\delta )}{b}},
	\end{split}
	\end{equation*}
	where $\mathcal{F}_l$ denotes the family of $F_l(\cdot)$ in the order-$l$. Therefore, since we consider the $L_1$-norm models, there exist eigenvalue $\left\| \boldsymbol{\lambda} \right\|_1\leq B_\lambda$ and base vector $\left\| \boldsymbol{a}_{ld} \right\|_1\leq B_a$, where $\forall l \in \mathcal{N}$ and $\forall d \in \{1,\cdots, \rho_l\}$. Under these constraints, the term $\normalsize{\textcircled{\scriptsize{1}}}$ can bound the empirical Rademacher complexity via Lemma~3 from \citet{dubey2022scalable} and Lemma~5.2 from \citet{massart2000some}, and we have 
	\begin{equation}
	\mathcal{L}(\widehat{Q}_{tot}) - \widehat{\mathcal{L}}_b(\widehat{Q}_{tot})\leq
	2B_\lambda \cdot \left ( \sum_{l=1}^n (B_a)^l \right )\cdot\sqrt{\frac{\log(n)}{b}} +
	2\sqrt{2}\cdot\sqrt{\frac{\log(2/\delta )}{b}}.
	\label{lem2}
	\end{equation}
	Finally, the bound for combining Eq.~(\ref{lem111})  and Eq.~(\ref{lem2}) provides us with the results of the proof.
\end{proof}

\begin{lemma}
	\label{lem}
	With probability at least $1-\delta$ for any $\delta\in(0, 1]$ and some absolute constants $\{C_1, C_2\}$, we have that
	\begin{equation*}
	\widehat{\mathcal{L}}_b(\widetilde{Q}_{tot})- \mathcal{L}(Q_{tot}^{\star}) \leq \sum_{l=1}^{n} \frac{C_1}{C_2}\cdot \exp(-\rho_l^\eta )+2\sqrt{\frac{\log(2/\delta )}{b}}.
	\end{equation*}
\end{lemma}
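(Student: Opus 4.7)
The plan is to split the difference into a stochastic piece and a deterministic approximation piece,
\[
\widehat{\mathcal{L}}_b(\widetilde{Q}_{tot}) - \mathcal{L}(Q_{tot}^{\star})
= \bigl[\widehat{\mathcal{L}}_b(\widetilde{Q}_{tot}) - \mathcal{L}(\widetilde{Q}_{tot})\bigr]
+ \bigl[\mathcal{L}(\widetilde{Q}_{tot}) - \mathcal{L}(Q_{tot}^{\star})\bigr],
\]
and to bound the two brackets separately. The first bracket is a one-sample concentration for a \emph{fixed} (non data-adaptive) hypothesis, while the second is a purely deterministic approximation error coming from the truncation of the spectral expansion of $Q_{tot}^{\star}$ to its first $\rho_l$ directions at each order $l$.

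For the first bracket I would apply a one-sided Hoeffding inequality to the i.i.d.\ sample of size $b$, exploiting that $\widetilde{Q}_{tot}$ is fixed (so $\ell(\widetilde{Q}_{tot},y)$ is an i.i.d.\ average of random variables taking values in $[0,1]$). This yields $\widehat{\mathcal{L}}_b(\widetilde{Q}_{tot}) - \mathcal{L}(\widetilde{Q}_{tot}) \leq \sqrt{\log(1/\delta)/(2b)}$ with probability at least $1-\delta$; after the conventional union-bound rescaling $\delta \to \delta/2$ used later in Theorem~\ref{theorem} and a loose constant inflation this is majorized by $2\sqrt{\log(2/\delta)/b}$, matching the second summand of the lemma.

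For the second bracket I would use the $1$-Lipschitz property of $\ell$ to convert loss differences into function differences, $\mathcal{L}(\widetilde{Q}_{tot}) - \mathcal{L}(Q_{tot}^{\star}) \leq \mathbb{E}_{\mathfrak{P}}\bigl[\lvert \widetilde{Q}_{tot} - Q_{tot}^{\star}\rvert\bigr]$. Writing $Q_{tot}^{\star} = f_0 + \sum_{l=1}^n \sum_{d=1}^{\overline{\rho}_l} \lambda_{ld}^{\star}\, \langle \boldsymbol{a}_{ld}^{\star}, F_l(Q)\rangle^l$ and defining $\widetilde{Q}_{tot}$ to keep only the indices $d\leq \rho_l$ at each order, the residual factors into $\sum_{l=1}^n \sum_{d>\rho_l} \lambda_{ld}^{\star}\, \langle \boldsymbol{a}_{ld}^{\star}, F_l(Q)\rangle^l$. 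Under the $L_1$-norm boundedness of the bases $\boldsymbol{a}_{ld}^{\star}$ and the boundedness of the shape outputs $F_l(Q)$ (both of which are built into the hypothesis class), this reduces to controlling $\sum_{l=1}^n \sum_{d>\rho_l}\lambda_{ld}^{\star}$. Plugging in Assumption~\ref{ass1} and the standard tail-integral bound
\[
\sum_{d>\rho_l} C_1 \exp(-C_2 d^\eta)
\;\leq\; \int_{\rho_l}^{\infty} C_1\exp(-C_2 t^\eta)\, dt
\;\leq\; \frac{C_1}{C_2}\exp(-\rho_l^\eta),
\]
and summing over $l$ produces the first summand of the lemma.

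The main obstacle is the tail-integral step: obtaining the clean bound $\int_{\rho_l}^{\infty} \exp(-C_2 t^\eta)\, dt \leq (1/C_2)\exp(-\rho_l^\eta)$ requires the change of variable $u = t^\eta$, handling the Jacobian $dt = (1/\eta)\, u^{(1-\eta)/\eta}\, du$, and absorbing any residual polynomial prefactors together with the ``true'' exponent $C_2\rho_l^\eta$ versus the desired $\rho_l^\eta$ into the absolute constants $C_1,C_2$ (which the lemma statement explicitly permits by speaking of ``some absolute constants''). Once this tail estimate is in hand the two pieces combine to give the claim.
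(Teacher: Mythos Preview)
Your proposal is correct and follows essentially the same route as the paper: the same two-term decomposition into a concentration piece (handled by Hoeffding/Azuma--Hoeffding for the fixed hypothesis $\widetilde{Q}_{tot}$) and a deterministic approximation piece (handled by the $1$-Lipschitz property of $\ell$, then bounding the truncation residual $\sum_{l}\sum_{d>\rho_l}\lambda_{ld}^{\star}$ via Assumption~\ref{ass1} and a tail-integral estimate). The only cosmetic differences are that the paper passes through $\sup_{Q}|\widetilde{Q}_{tot}-Q_{tot}^{\star}|$ rather than the expectation, and it cites Eq.~(E.16) of \citet{yang2020function} for the integral bound you spell out explicitly.
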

\begin{proof} 
	Observe,
	\begin{equation*}
	\begin{split}
	\widehat{\mathcal{L}}_b(\widetilde{Q}_{tot})- \mathcal{L}(Q_{tot}^{\star}) =
	&\widehat{\mathcal{L}}_b(\widetilde{Q}_{tot})- \mathcal{L}(\widetilde{Q}_{tot}) + 
	\mathcal{L}(\widetilde{Q}_{tot})- \mathcal{L}(Q_{tot}^{\star})\\
	\leq &\underbrace{\left|\widehat{\mathcal{L}}_b(\widetilde{Q}_{tot})- \mathcal{L}(\widetilde{Q}_{tot})\right|}_{\normalsize{\textcircled{\scriptsize{2a}}}} + 
	\underbrace{\left| \mathcal{L}(\widetilde{Q}_{tot})- \mathcal{L}(Q_{tot}^{\star})\right|}_{\normalsize{\textcircled{\scriptsize{2b}}}}
	\end{split}.
	\end{equation*}
	
	To bound $\normalsize{\textcircled{\scriptsize{2a}}}$, we have sample points $\in\mathfrak{P}$ in a batch $b$ that satisfies $\mathcal{L}(\widetilde{Q}_{tot}) = \mathbb{E}[\ell(\widetilde{Q}_{tot}, y)]$, where $0\leq\ell(\cdot, \cdot)\leq 1$. Hence we employ Azuma-Hoeffding's inequality~\cite{bercu2015concentration} and substitute the reproducing Hilbert space~(RHS) ~\cite{berlinet2011reproducing} probability with $1-\delta$, which can be rewritten in terms as 
	\begin{equation*}
	\left|\widehat{\mathcal{L}}_b(\widetilde{Q}_{tot})- \mathcal{L}(\widetilde{Q}_{tot})\right|\leq 2\sqrt{\frac{\log(2/\delta )}{b}}.
	\end{equation*}
	
	Since $\ell$ is $L$-Lipschitz, we have for some $\{\widetilde{Q}_{tot},Q^{\star}_{tot}, y\}  \in\mathcal{Y}$,
	\begin{equation*}
	\begin{split}
	\left| \ell(\widetilde{Q}_{tot}, y)- \ell(Q^{\star}_{tot} , y) \right|
	\leq &\left|L\cdot |\widetilde{Q}_{tot}- y |- L\cdot |Q^{\star}_{tot}- y| \right| \\
	= & L\cdot\left| |\widetilde{Q}_{tot}- y|- |Q^{\star}_{tot}- y  | \right| \\
	\leq & L\cdot \left|\widetilde{Q}_{tot}- Q^{\star}_{tot} \right|.
	\end{split} 
	\end{equation*}
	Thus, when $L=1$, the  bound $\normalsize{\textcircled{\scriptsize{2b}}}$ is derived as
	\begin{equation*}
	\begin{split}
	\left| \mathcal{L}(\widetilde{Q}_{tot})- \mathcal{L}(Q_{tot}^{\star})\right|
	\leq& \left|\mathbb{E}_{([Q_i]_{i=1}^n, y)\sim \mathfrak{P}}[\ell(\widetilde{Q}_{tot},y)-\ell(Q_{tot}^{\star}, y)] \right| \\
	\leq& \mathbb{E}_{([Q_i]_{i=1}^n, y)\sim \mathfrak{P}}\left [ |\ell(\widetilde{Q}_{tot},y)-\ell(Q_{tot}^{\star}, y)|\right ]
	\\
	\leq& L \cdot \mathbb{E}_{([Q_i]_{i=1}^n, y)\sim \mathfrak{P}}\left [ |\widetilde{Q}_{tot}-Q_{tot}^{\star} |\right ]
	\\
	\leq& L \cdot \sup_{Q\in \mathcal{Q}} |\widetilde{Q}_{tot}-Q_{tot}^{\star} |\\
	=& \sup_{Q\in \mathcal{Q}} |\widetilde{Q}_{tot}-Q_{tot}^{\star} |.
	\end{split} 
	\end{equation*}
	
	Observing now that $\forall Q \in \mathcal{Q}$, we have
	\begin{equation*}
	\begin{split}
	\left | \widetilde{Q}_{tot}-Q_{tot}^{\star} \right|
	= &\left|\sum_{l=1}^{n} \sum_{d=\rho_l}^{\overline{\rho}_l}  \lambda _{ld}^{\star}  \cdot  \left<\boldsymbol{a}_{ld}^{\star} , F_l({Q})\right>^l \right| \\
	\leq &\sum_{l=1}^{n} \sum_{d=\rho_l}^{\overline{\rho}_l} \left|\lambda _{ld}^{\star}  \cdot   \left<\boldsymbol{a}_{ld}^{\star} , F_l({Q})\right>^l \right| \\
	\leq & \sum_{l=1}^{n} \sum_{d=\rho_l}^{\overline{\rho}_l} \left|\lambda _{ld}^{\star}   \right|,
	\end{split} 
	\end{equation*}
	when hold on Assumption~\ref{ass1}, we have that $\lambda _{ld}= C_1 \exp(-C_2\cdot d^\eta)$ if obeys the $\eta $-exponential spectral decay. Thus, 
	\begin{equation*}
	\sum_{l=1}^{n} \sum_{d=\rho_l}^{\overline{\rho}_l} \left|\lambda _{ld}^{\star}   \right|\leq  \sum_{l=1}^n  \sum_{d=\rho_l}^{\overline{\rho}_l}   C_1 \exp(-C_2\cdot d^\eta)\leq \sum_{l=1}^{n} \int_{d=\rho_l}^{\infty}C_1 \exp(-C_2\cdot d^\eta).
	\end{equation*}
	Since $\eta\geq 1$, we can bound by the Eq.~(E.16) from~\citet{yang2020function} with the RHS as 
	\begin{equation*}
	\left | \widetilde{Q}_{tot}-Q_{tot}^{\star} \right|\leq 
	\sum_{l=1}^{n} \int_{d=\rho_l}^{\infty}C_1 \exp(-C_2\cdot d^\eta)\leq  \sum_{l=1}^{n} \frac{C_1}{C_2}\exp(-\rho_l^\eta ).
	\end{equation*}
	Therefore, we finish the proof of Lemma~\ref{lem}.
	
\end{proof}

%

\section{Variational Auto-Encoder Background}\label{app3}
A variational auto-encoder (VAE)~\cite{sohn2015learning} is a popular generative model to learn an attention mask, e.g., U-Net~\cite{ronneberger2015u} for semantic segmentation.
VAE aims to maximize the marginal log-likelihood $\log p({T})=\sum_{j=1}^{b}\log p(\tau^j)$, where ${T}=[\tau^j]_{j=1}^b\in \mathcal{T}$ denotes the set of local action-observation histories from $\mathcal{B}$, and it is common to replace the optimized variational lower-bound as
\begin{equation*}
\log p({T})\geq \mathbb{E}_{q({T}|z)}\left [ \log p({T}|z) \right ]+D_{\text{KL}}(q(z|{T})||p(z)),
\end{equation*}
where $p(z)$ generally is a multivariate normal distribution $\mathcal{N}(0, I)$ to represent the prior. We define the posterior $q(z|{T}) = \mathcal{N}(z|\mu, \sigma^2({T})I)$ as the encoder $E_{\omega_1}$ and $p({T}|z)$ as the decoder $D_{\omega_2}$. 
It is understood that given a sample $\tau$ is fed into the VAE to produce a latent semantic vector $z$, and then this vector is reconstructed into the desired sample by training. 
To apply gradient descent on the variational lower-bound, we allow the re-parametrization trick~\cite{rezende2014stochastic} to train on a reconstruction loss with a KL-divergence as
\begin{equation*}
\mathbb{E}_{z\sim \mathcal{N}(\mu, \sigma )}\left [ f(z) \right ] = \mathbb{E}_{\nu  \sim\mathcal{N}(0, I)} \left [ f(\mu+\sigma\nu  ) \right ].
\end{equation*}
Thus $\mu$ and $\sigma$ can be represented by deterministic functions, allowing for back-propagation.

\section{Pseudo Code}\label{pseudo}

\begin{algorithm}[H]
	\renewcommand{\algorithmicrequire}{\textbf{Input:}}
	\renewcommand{\algorithmicensure}{\textbf{Output:}}
	\caption{Neural Attention Additive Q-learning}
	\label{alg1}
	\begin{algorithmic}
		\STATE Initialize a set of agents $\mathcal{N}=\{1,2,\cdots, n\}$
		\STATE Initialize networks of local agents $Q_i(\tau_i, u_i; \theta)$ and target networks
		$Q_i(\tau_i', u_i';\hat{\theta})$,  $G_{\hat{\omega}}$ with $\hat{\theta} \leftarrow \theta$
		\STATE	Initialize a VAE $G_{\omega} = \left\{E_{\omega_1},  D_{\omega_2} \right\} $ with parameters  $\omega$
		\STATE Initialize a replay buffer  $\mathcal{B}$ for storing episodes
		\REPEAT
		\STATE Initialize a history embedding $h_i^0$ and an action vector $u^0_i$ for each agent 
		\STATE Observe each agent's partial observation $\left[o^1_{i}\right]_{i=1}^{n}$
		
		\FOR{$t=1:T$}
		\STATE Get $\tau_i^t=\left\{o^t_{i}, h^{t-1}_{i}\right\}$ for each agent and calculate the individual value function $Q_i(\tau^t_i, u^{t-1}_i)$
		\STATE  Get the hidden state $h^t_i$ and select action $u^t_i$ via value function with probability $\varepsilon$ exploration
		\STATE Unsampled $n$ identity semantic masks $\left[\mathcal{M}_i \sim G_\omega(h_i^t)\right]_{i=1}^n$ as an interpretation
		\STATE Execute $u^t_i$ to receive the reward $r^t$, next state $\boldsymbol{s}^{t+1}$
		\ENDFOR
		
		\STATE Store the episode trajectory to $\mathcal{B}$
		\STATE Sample a batch of episodes trajectories with batch size $b$ from $\mathcal{B}$
		
		\FOR{$t=1:T$}
		\STATE Calculate $\mu, \sigma = E_{\omega_1}(\tau_i^t)$ and identity semantics $\boldsymbol{z}=[z_i \sim  \mathcal{N}(\mu, \sigma)]^n_{i=1}$
		\STATE Get $\widetilde{o_i} = \mathcal{M}_i\odot o_i$ and calculate $\mathcal{L}_{G_{\omega}}$ via Eq.~(\ref{eq3})
		\STATE Get the attention weight $\alpha_k(\boldsymbol{z}, \boldsymbol{s} )$  by the intervention function in Eq.~(\ref{eq4})
		\STATE Calculate the joint value function within order-$2$ interactions via Eq.~(\ref{eq5}) 
		\ENDFOR
		
		\STATE Construct the loss function defined in Eq.~(\ref{eq8})
		\STATE Update $\omega$ and $\theta$ by minimizing the above loss
		\STATE Periodically update  $\hat{\theta} \leftarrow \theta$ 
		\UNTIL $Q_i(\tau_i, u_i; \theta)$ converges
	\end{algorithmic}  
\end{algorithm}

\section{Related work}
\textbf{Value Decomposition in MARL.} 
Since the joint action space grows exponentially in proportion to the number of participating agents~\cite{yang2018mean}, the centralized training and decentralized execution (CTDE)~\cite{oliehoek2008optimal} paradigm is proposed to relieve this issue and become a mainstream framework in MARL. 
One of the crucial challenges in CTDE is credit assignment, which aims to infer how much each agent contributes to the overall success. 
Under the CTDE framework, VDN~\cite{sunehag2017value} assumes that any joint action-value function can be decomposed into a linear summation of individual value functions.
Nevertheless, this equivalent factorization limits the credit assignment of the global Q-value. 
To mitigate this issue, some implicit credit assignment methods, e.g., QMIX~\cite{son2019qtran} and QTRAN~\cite{wang2020qplex}, represent the joint value function into a richer family for value decomposition with complex nonlinear transformation function. 
Further, Weighted QMIX~\cite{rashid2020weighted} proposes a weighted projection to decompose the joint action-value function, and PMIC~\cite{PMIC} utilizes more effective mutual information to collaborate better.
However, these methods neglect causal explanations in credit assignment, which may be unreasonable since suboptimal actions lack an explicit reasoning mechanism.
They entangle the interactions at temporal hidden layers for credit assignment.
Thus, recent works~\cite{wang2021shaq, li2021shapley} apply the Shapley theory to trustworthiness for inferring the credits, where fairness is achieved by considering the incremental marginal contribution of one of the agents. 
These methods fail to interpret the impact of agent observation on decision-making or explicitly present how they cooperate with each other.
Whereas glass-box models in MARL, e.g., mixture soft decision trees~\cite{liu2022mixrts} and visual perception~\cite{blumenkamp2021emergence}, do not achieve exciting performance. 
To resolve these problems, we propose a novel interpretable value decomposition method in this paper.

\textbf{Generalized Additive Models.} GAMs are generally regarded as powerful inherently-interpretable models in the machine learning community~\cite{hastie1986generalized}. 
It independently learns a shape function for each feature and sums the outputs of these functions to obtain the final model prediction.
Previous work~\cite{lou2013accurate} has found that standard forms of GAMs are limited in their representational power due to the absence of learning interactions between inherent features.
As an improvement, \citet{lou2013accurate} proposed
G$\text{A}^\text{2}$M that incorporates the complexity of pairwise interactions into GAMs.
To improve stability and performance, different variants of shape functions in GAMs have been investigated, including deep neural networks~\cite{agarwal2021neural}, polynomial kernel models~\cite{dubey2022scalable}, and oblivious decision trees~\cite{chang2021node}.
Further, NIT~\cite{tsang2018neural} and pureGAM~\cite{sun2022puregam} reduce complexity by adding constraint terms, achieving increased interpretability.
Our work falls under the umbrella of the GAM family.
We are the first to develop GAMs in value-based MARL by utilizing them to disentangle the joint action-value function across different interactions, thereby obtaining intrinsic and interpretable higher-order shape functions of the agents.

\section{Experimental Details}\label{app4444}

\subsection{Benchmarks and Settings}
In our paper, we introduce two types of testing benchmarks as shown in \textit{Figure~\ref{stab}}, including Level Based Foraging~(LBF) and StarCraft Multi-Agent Challenge~(SMAC). In this section, we will describe the details and settings of these benchmarks.

\begin{figure*}[ht]
	\centering 
	\subfigure[Level Based Foraging]{
		\includegraphics[width=.3\linewidth]{./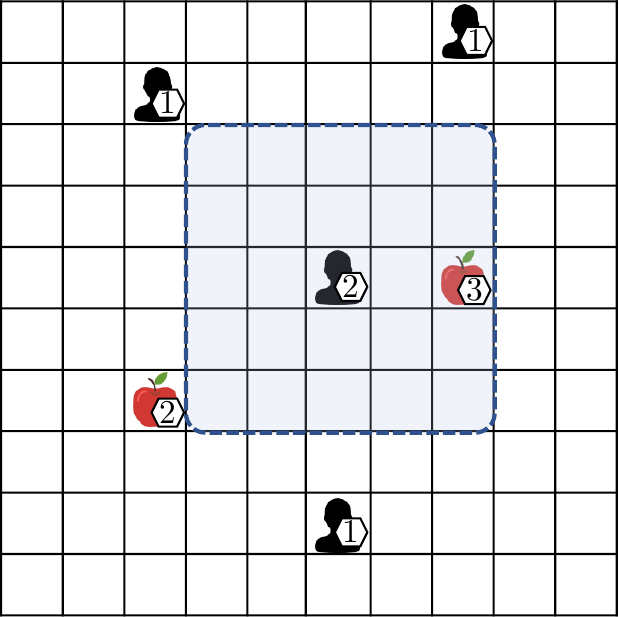}}
	\hspace{.5in}
	\subfigure[StarCraft Multi-Agent Challenge]{
		\includegraphics[width=.43\linewidth]{./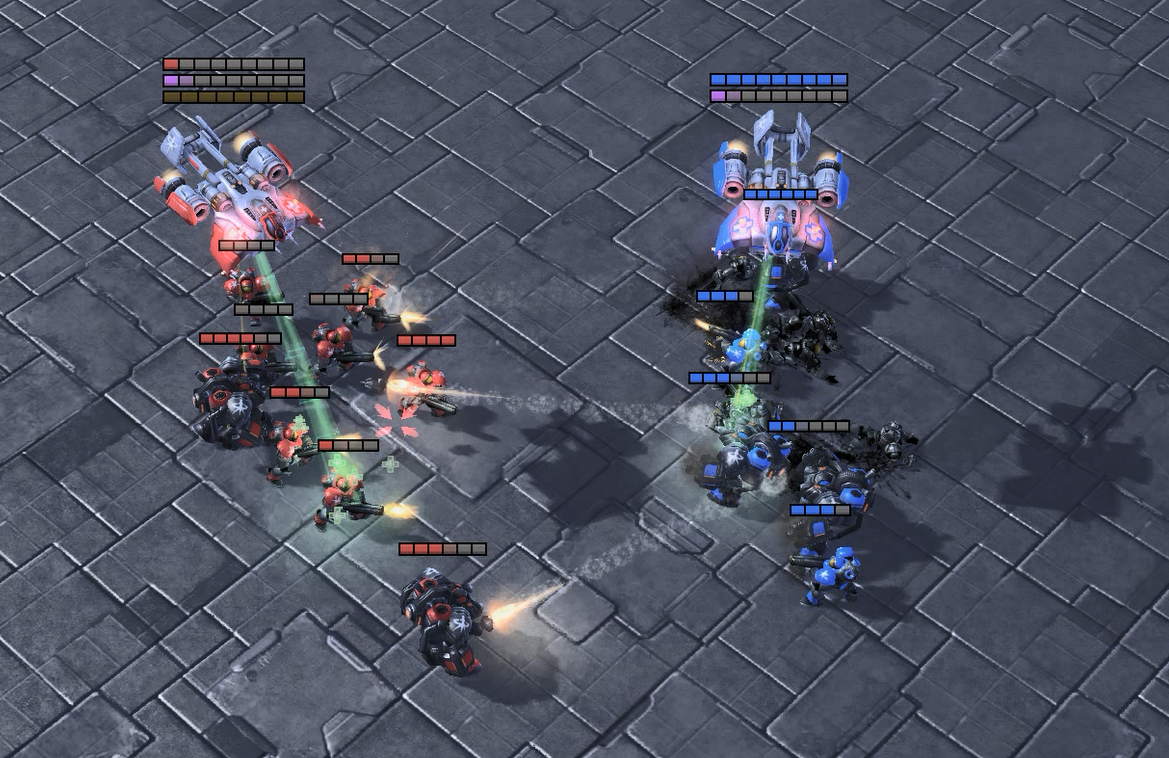}}
	\caption{Two benchmarks used in our experiments.}
	\label{stab}
\end{figure*}

\textbf{Level Based Foraging.} \citet{christianos2020shared} first uses this implementation of
LBF. This environment is a mixed game of cooperation and competition. Specifically, agents are placed in a $10\times 10$ grid world and each one is assigned a level.
The observation of an agent has a $5\times 5$ field of view around it.
Their goal is to eat food that is also randomly scattered. 
Only if the total level of the agents involved in eating is equal to or higher than the food level does the agents receive a positive reward, which is the normalized food level.
Furthermore, we set the penalty reward for movement to $-0.002$.
On this basis, we use two task instances with different configurations, of which one is $2$ food with $4$ agents, and $3$ food with $3$ agents. 
We give other experimental settings in \textit{Table~\ref{table4}}.

\begin{table}[t]
	\caption{Experimental settings of Level Based Foraging.}\label{table4}
	\vskip 0.15in
	\begin{center}
		\begin{small}
			\begin{sc}
				\begin{tabular}{lcl}
					\toprule
					Hyperparameter & Value & Description \\
					\midrule
					Max player level & 3 & Maximum agent level attribute\\
					max episode length & 50 & Maximum Timesteps per episode\\
					batch size & 32 &  Number of episodes per update\\
					test interval & 10,000 &  Frequency of evaluating performance\\
					test episodes & 32 & number of episodes to test\\
					Replay batch size    & 5000 & Maximum number of episodes stored in memory\\
					Discount factor $\gamma $  &0.99 & Degree of impact of future rewards \\
					Total  timesteps  & 1,050,000 &  Number of training steps  \\
					start $\varepsilon$   & 1.0   & the start $\varepsilon$ value to explore\\
					finish $\varepsilon$   & 0.05   & the finish $\varepsilon$ value to explore\\
					Anneal Steps for $\varepsilon$  &50, 000 & number of steps of linear annealing\\
					Target update interval  &  200  &  the target network update cycle\\
					\bottomrule
				\end{tabular}
			\end{sc}
		\end{small}
	\end{center}
	\vskip -0.1in
\end{table}

\label{4fafas}
\textbf{StarCraft Multi-Agent Challenge.} The SMAC~\cite{samvelyan2019starcraft} is one of the most popular multi-agent environments to test the performance of MARL algorithms. All algorithm implementations are based on StarCraft II (SC2.4.10 version) unit micromanagement tasks, and note that results from different versions are not comparable. 
We set the built-in AI difficulty of all enemy units by configuring $\text{difficulty=7}$, and all allied units are controlled by the corresponding RL algorithm. 
The allies need to learn a series of strategies to defeat all the enemies and win within the specified exploration length.
In this paper, we evaluate all algorithms on 12 challenging combat scenarios in SMAC, and \textit{Table~\ref{bibisadadd}} presents a brief introduction of these scenarios and the maximum training step.
Furthermore, the specific environmental settings adhere to the original setups, as described in \textit{Table~\ref{bibdasdaid}}.

\begin{table}[t]
	\caption{Introduction of scenarios in SMAC benchmark.}\label{bibisadadd}
	\vskip 0.15in
	\begin{center}
		\begin{small}
			\begin{sc}
				\begin{tabular}{cllcc}
					\toprule
					Map name & Ally Units & Enemy Units & Total  timesteps & Scenario Type\\
					\midrule
					$8m$ &8 Marines & 8 Marines& $2M$ &  Easy\\
					$2s3z$ &2 Stalkers, 3 Zealots& 2 Stalkers, 3 Zealots&$2M$ &Easy\\
					$2s\_vs\_1sc$ &2 Stalkers&1 Spine Crawler &$2M$ &Easy\\
					\hline
					$3s5z$ &3 Stalkers, 5 Zealots&3 Stalkers, 5 Zealots&$2M$ &Hard\\
					$3s\_vs\_5z$ &3 Stalkers&5 Zealots&$2M$ &Hard\\
					$2c\_vs\_64zg$ &2 Colossi&64 Zerglings&$2M$ &Hard\\
					$5m\_vs\_6m$ &5 Marines & 6 Marines&$2M$ &Hard\\
					$8m\_vs\_9m$ &8 Marines & 9 Marines&$2M$ &Hard\\
					\hline
					\multirow{2}{*}{$\textit{MMM}2$}& 1 Medivac, 2 Marauders,&1 Medivac, 	3 Marauders,&\multirow{2}{*}{ $2M$ }&\multirow{2}{*}{Super hard}\\
					&and  7 Marines&and 8 Marines& & \\
					$3s5z\_vs\_3s6z$ &3 Stalkers, 5 Zealots&3 Stalkers, 6 Zealots&$5M$ &Super hard\\
					$corridor$ &6 Zealots&24 Zerglings&$5M$ &Super hard\\
					$6h\_vs\_8z$ &6 Hydralisks& 8 Zealots&$5M$ &Super hard\\
					\bottomrule
				\end{tabular}
			\end{sc}
		\end{small}
	\end{center}
	\vskip -0.1in
\end{table}

\begin{table}[ht]
	\caption{Experimental settings of StarCraft Multi-Agent Challenge.}\label{bibdasdaid}
	\vskip 0.15in
	\begin{center}
		\begin{small}
			\begin{sc}
				\begin{tabular}{lcl}
					\toprule
					Hyperparameter & Value & Description \\
					\midrule
					difficulty & 7 & Enemy units with built-in AI difficulty\\
					batch size & 32 &  Number of episodes per update\\
					test interval & 10,000 &  Frequency of evaluating performance\\
					test episodes & 32 & number of episodes to test\\
					Replay batch size    & 5000 & Maximum number of episodes stored in memory\\
					Discount factor $\gamma $  &0.99 & Degree of impact of future rewards \\
					start $\varepsilon$   & 1.0   & the start $\varepsilon$ value to explore\\
					finish $\varepsilon$   & 0.05   & the finish $\varepsilon$ value to explore\\
					Anneal Steps for easy \& hard  &50,000& number of steps of linear annealing $\varepsilon$\\
					Anneal Steps for super hard  &100,000& number of steps of linear annealing $\varepsilon$\\
					Target update interval  &  200  &  the target network update cycle\\
					\bottomrule
				\end{tabular}
			\end{sc}
		\end{small}
	\end{center}
	\vskip -0.1in
\end{table}

\begin{table}[ht]
	\caption{The specific structure of the shape function.}\label{bibdasdadfafaid}
	\vskip 0.15in
	\begin{center}
		\begin{small}
			\begin{sc}
				\begin{tabular}{ll}
					\toprule
					No. &Structure \\
					\midrule
					1st layer& [abs(linear.weight), Linear(order number, 8), elu] \\
					2nd layer& [abs(linear.weight), Linear(8, 4), elu] \\
					3rd layer& [abs(linear.weight), Linear(4, 1)] \\
					\bottomrule
				\end{tabular}
			\end{sc}
		\end{small}
	\end{center}
	\vskip -0.1in
\end{table}

\subsection{Hyperparameters of Baselines} 
We compare our method against nine popular value-based baselines, including VDN~\cite{sunehag2017value}, QMIX~\cite{rashid2018qmix}, QTRAN~\cite{son2019qtran}, Qatten~\cite{yang2020qatten}, QPLEX~\cite{wang2020qplex}, Weighted QMIX (mainly OW-QMIX, and we rename it WQMIX in our experiments)~\cite{rashid2020weighted}, CDS\footnote{The code of CDS is from \url{https://github.com/lich14/CDS}.}~\cite{li2021celebrating}, DVD~\cite{li2022deconfounded}, and SHAQ\footnote{The code of SHAQ is from \url{https://github.com/hsvgbkhgbv/shapley-q-learning}.}~\cite{wang2021shaq}, whereas the implementation of baselines is based on PyMARL\footnote{The source code of implementations is from \url{https://github.com/oxwhirl/wqmix}.}.  
All hyperparameters follow the code provided by the authors, and are maintained at a learning rate of 0.0005 by the RMSprop optimizer.
Note that the learning rate of SHAQ is fine-tuned to each different scenario, which is unfair to the other baselines, hence the hyperparameters are set identically to others.

\subsection{Hyperparameters of N$\text{A}^\text{2}$Q}
In this paper, we utilize a recurrent style local Q-network with its default hyperparameters, specifically, the individual Q-function $Q_i(\tau_i, u_i)$ contains a GRU layer with a 64-dimensional hidden state and a ReLU activation layer.
The optimization for individual Q-functions is conducted using RMSprop with weight decay and a learning rate of 0.0005.
Regarding the generative model $G_{\omega}$, both encoder and decoder are comprised of two fully connected layers with a 32-dimensional hidden state,  optimizing the learnable parameters by Adam with a learning rate of 0.0005.
Additionally, we set the weight $\beta$ of the loss to 0.1.
In the mixing network, we employ a small dimensional MLP for each shape function $f_k$ in order-1 and order-2, whose details are shown in \textit{Table~\ref{bibdasdadfafaid}}. 
Finally, for the attention mechanism, we set the hidden layer size to 64 for $\boldsymbol{w}_s$ and $\boldsymbol{w}_z$.

\subsection{Infrastructure}
Experiments are performed on an NVIDIA RTX 3080Ti GPU and an Intel I9-12900k CPU. We train our approach to run from 1 to 20 hours per scenario, depending on the complexity and length of the episode for each scenario.

\section{Interpretability on LBF}\label{LBF}
\textit{Figure~\ref{overview_lbf}} demonstrates the contribution of agents and sub-teams on an episode in the LBF task, as well as showing the agent's corresponding mask.
It is clear that N$\text{A}^\text{2}$Q accurately models the contribution of any agent or coalition of agents to the overall success.
Furthermore, unsampled individual semantics can help us diagnose in a more interpretable way the relative importance of individual agent masks to relevant observations in the decision-making process.
\begin{figure}[H]
	\centering 
	\subfigure[step = 1]{
		\includegraphics[width=.24\linewidth]{./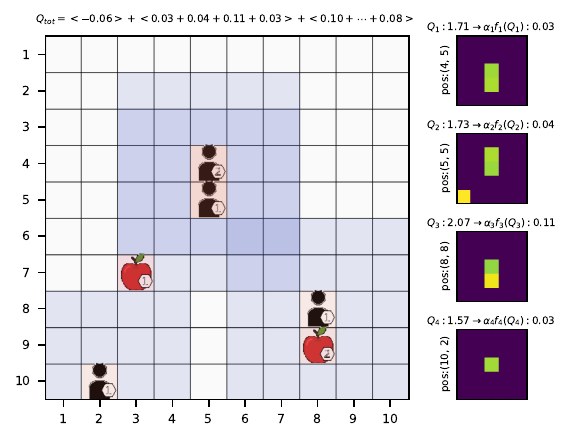}}
	\subfigure[step = 2]{
		\includegraphics[width=.24\linewidth]{./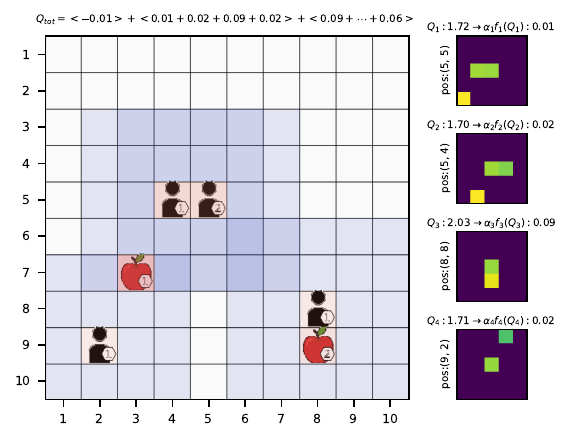}}
	\subfigure[step = 3]{
		\includegraphics[width=.24\linewidth]{./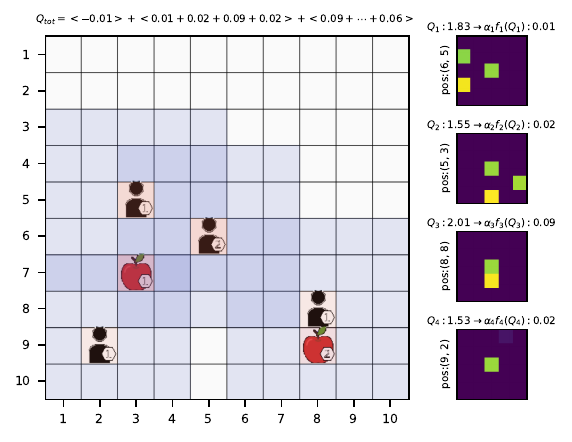}}	\subfigure[step = 4 (eating)]{
		\includegraphics[width=.24\linewidth]{./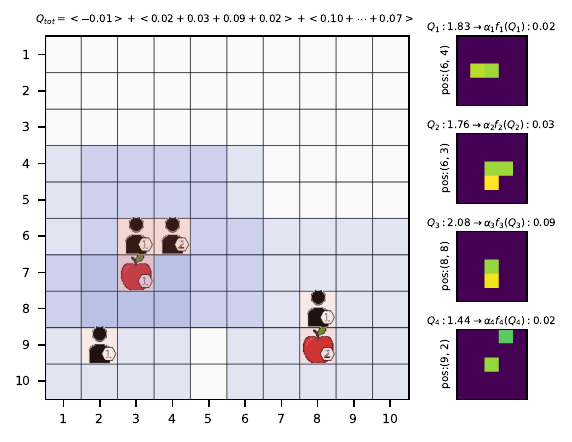}}\\
	\vskip -0.1in
	\subfigure[step = 5]{
		\includegraphics[width=.24\linewidth]{./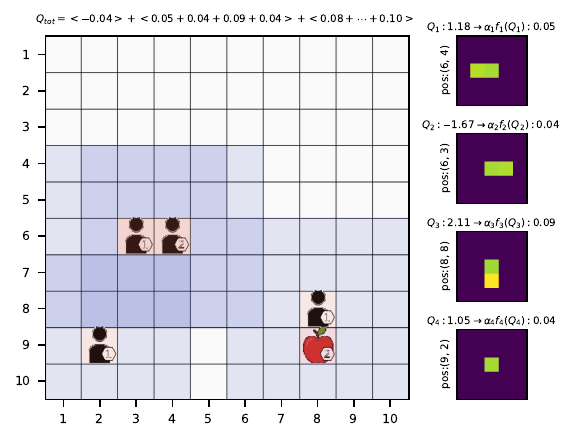}}
	\subfigure[step = 6]{
		\includegraphics[width=.24\linewidth]{./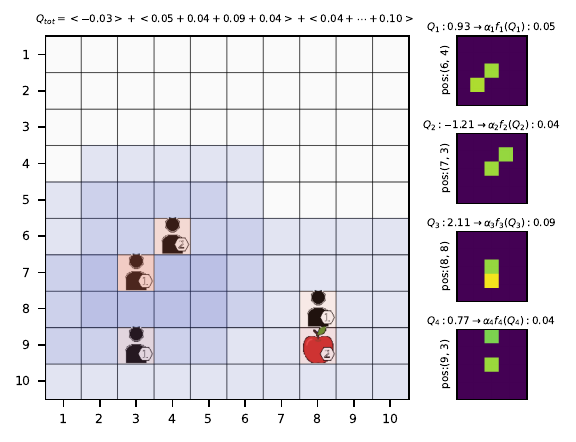}}
	\subfigure[step = 7]{
		\includegraphics[width=.24\linewidth]{./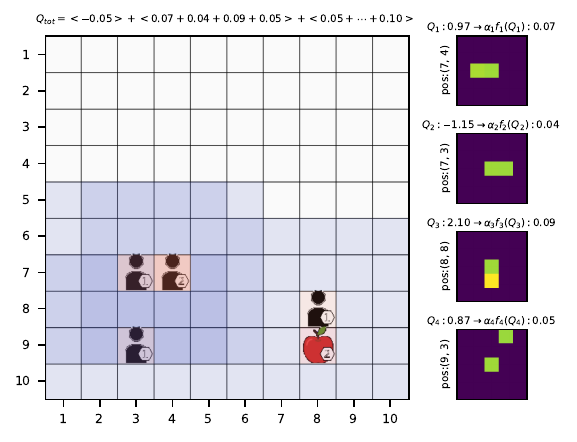}}
	\subfigure[step = 8]{
		\includegraphics[width=.24\linewidth]{./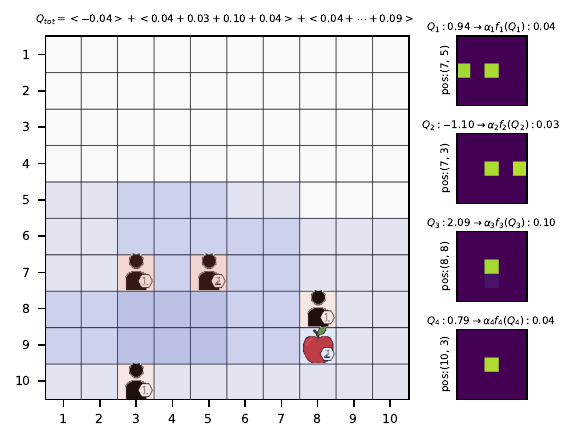}}\\
	\vskip -0.1in
	\subfigure[step = 9]{
		\includegraphics[width=.24\linewidth]{./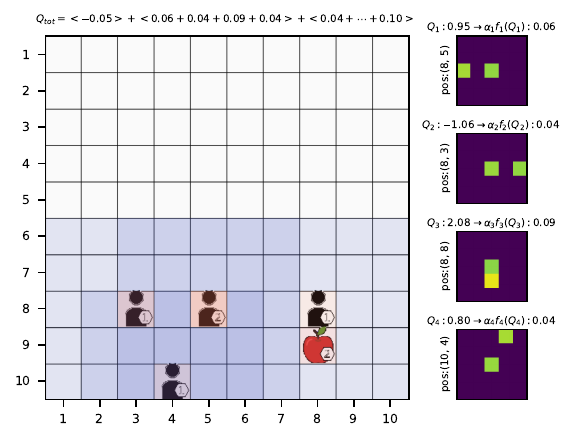}}
	\subfigure[step = 10]{
		\includegraphics[width=.24\linewidth]{./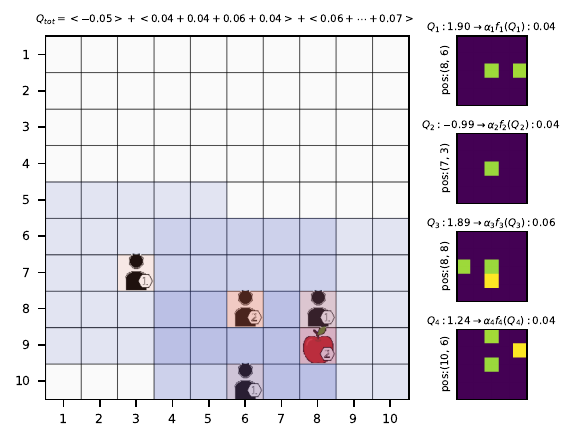}}
	\subfigure[step = 11]{
		\includegraphics[width=.24\linewidth]{./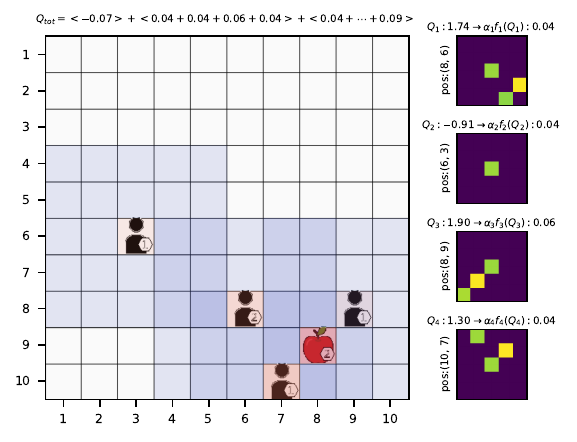}}
	\subfigure[step = 12 (eating)]{
		\includegraphics[width=.24\linewidth]{./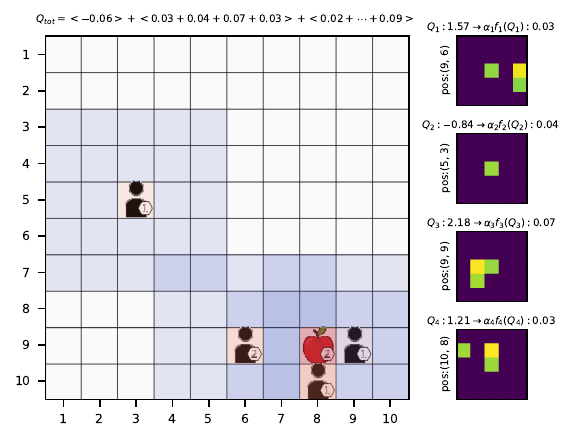}}
	\vskip -0.1in
	\caption{Visualization of the agent's mask on an episode,  and the title indicates the contribution of each individual and agent alliance. The highlighted areas are the important regions for making decisions. 
		As expected, when the environment changes, the attention and contribution of the agents also change accordingly.
	}\label{overview_lbf}
\end{figure}

\begin{figure}[ht]
	\begin{center}
		\subfigure[Agent properties and corresponding mask values]{		
			\includegraphics[width=.81\linewidth]{./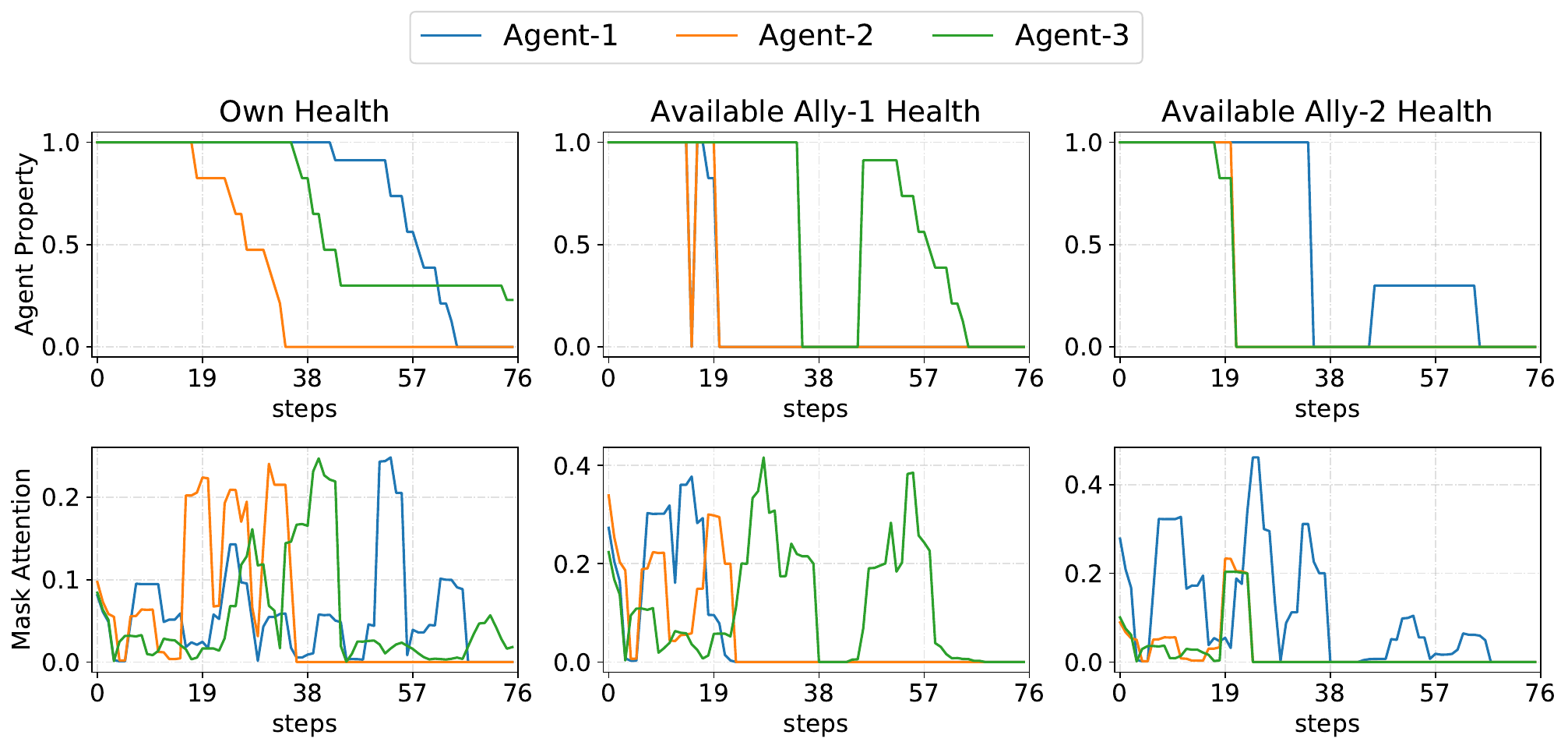}}
		\subfigure[Agent contributions]{
			\includegraphics[width=.17\linewidth]{./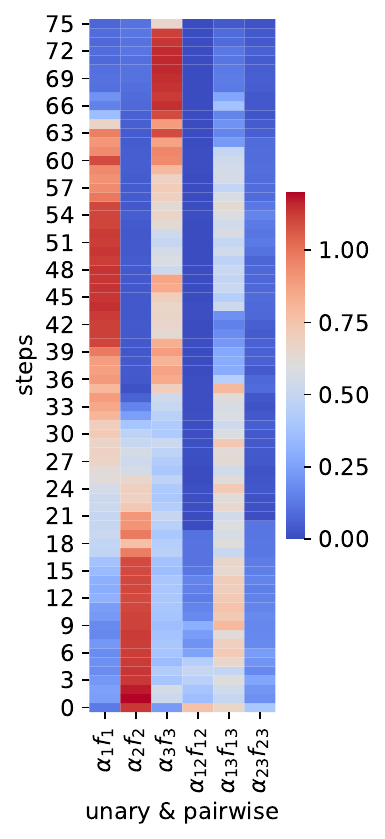}}
		\caption{Visualization of property semantics and agent contributions on  the 3s\_vs\_5z scenario.}\label{smacccccc1}
	\end{center}
\end{figure}

\begin{figure}[ht]
	\begin{center}
		\centerline{\includegraphics[width=\linewidth]{./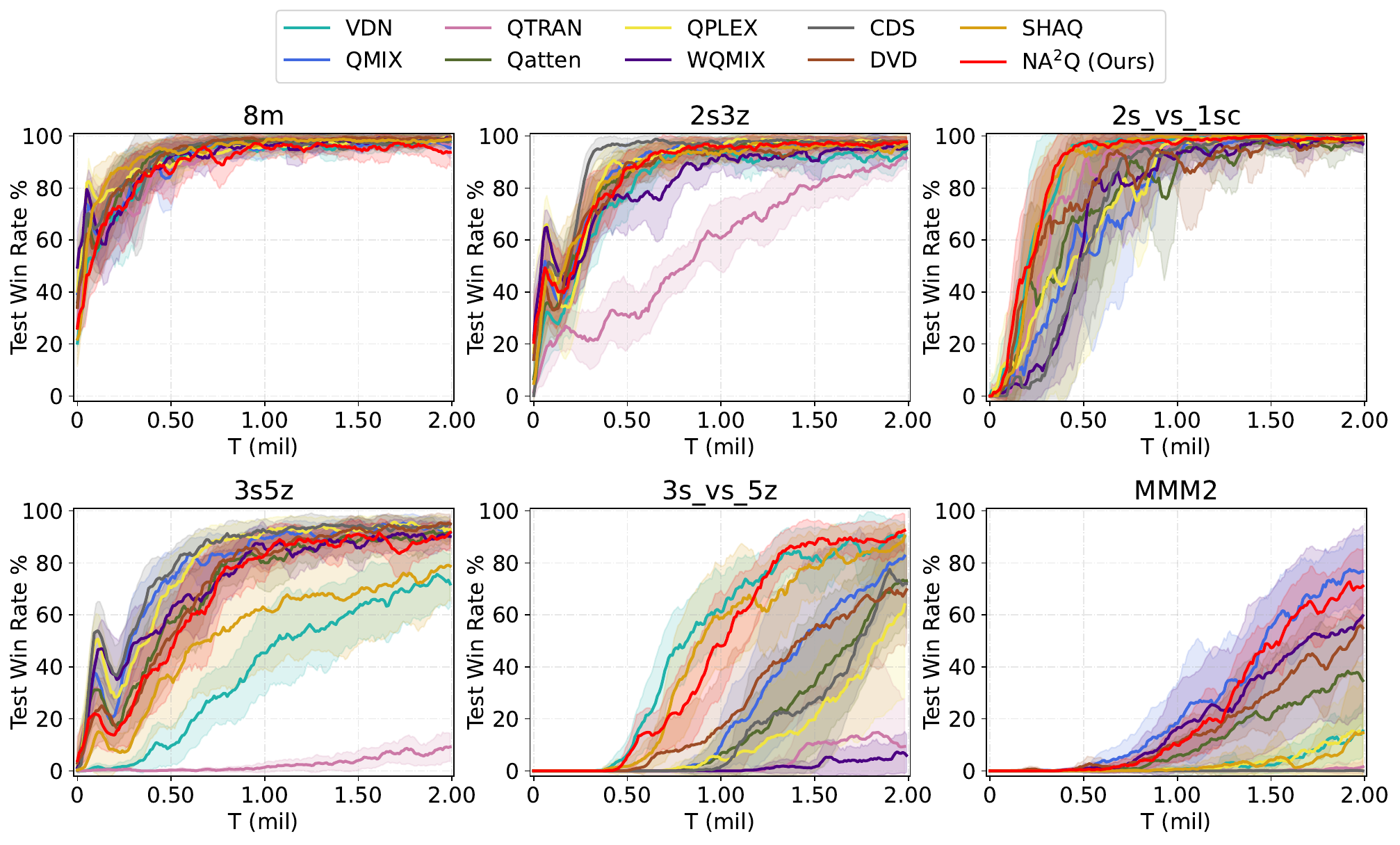}}
		\caption{Test win rate \% for six extra scenarios of SMAC benchmark.}
		\label{overview_results_app}
	\end{center}
\end{figure}

\section{Performance Results on Extra SMAC Maps}\label{Extra}

To thoroughly compare the performance of our method against the baselines, we experiment with six extra scenarios in \textit{Figure~\ref{overview_results_app}} on SMAC, including 8m, 2s3z, 2s\_vs\_1sc, 3s5z, 3s\_vs\_5z, and MMM2. 
The parameter settings are in accordance with the previous experiments.
It is obvious that N$\text{A}^\text{2}$Q still achieves impressive results on these six scenarios.

\section{Additional Interpretability on SMAC}\label{maskSMAC}

To further clarify the interpretability of N$\text{A}^\text{2}$Q, we select three properties related to the health of the agents to represent identity semantics, including own health, available Ally-1 health, and available Ally-2 health, and display the contribution of the corresponding agent on an episode. 
As shown in \textit{Figure~\ref{smacccccc1}(a)}, the horizontal coordinate represents the number of steps on the episode, and the two vertical coordinates represent corresponding properties and semantic mask values, respectively. 
We find that the importance of the mask increases when the observed agent is harmed.
Specifically, the teams are attacked with the sequence of Agent-2, Agent-3, and Agent-1, and the importance of their features peaked, respectively.
Also, the corresponding mask is elevated when the visible ally receives damage.
At the same time, we visualize the agent contributions to the unary and pairwise shape functions as shown in \textit{Figure~\ref{smacccccc1}(b)}, where the steps increase from bottom to top and the horizontal ordination indicates the contribution id.
The results show that the agents have different sensitivities at different stages of the battle.
For example, Agent-2 performs a kiting operation, causing it to have a high contribution at the beginning stage. 
Meanwhile, Agent-1 and Agent-3 engage in cooperative attacks, resulting in higher contributions from sub-teams than from individual agents.
In the later stages, agents are attacked separately, leading to higher contributions from individuals. 
Notably, the earlier death of Agent-2 leads to the pairwise shape functions associated with it remaining at depressed values.
In summary, the N$\text{A}^\text{2}$Q can understand complex observations by diagnosing identity semantics and better explain the sub-spaces within order-$2$ interactions for the decomposition of the joint action-value function.

\end{document}